\pdfoutput=1
\documentclass{article}

\usepackage{graphicx}
\usepackage{xcolor}

\usepackage{float}
\usepackage{amsmath}
\usepackage{amsthm}
\usepackage{amssymb}
\usepackage{authblk}
\usepackage{url}
\usepackage{cite}
\usepackage{hyperref}
\usepackage{cleveref}
\usepackage{fullpage}

\newtheorem{theorem}{Theorem}
\newtheorem{lemma}[theorem]{Lemma}

\newtheorem{corollary}[theorem]{Corollary}

\newcommand{\CSubstr}{{\mathsf{CSubstr}}}

\newcommand{\ov}{{\mathsf{ov}}}
\newcommand{\maxov}{{\mathsf{maxov}}}
\newcommand{\calS}{{\mathcal{S}}}
\newcommand{\GREEDY}{{\mathrm{GREEDY}}}
\newcommand{\OPT}{{\mathrm{OPT}}}

\title{Disproving the Greedy Superstring Conjecture}
\author[1]{Hiroki Shibata\thanks{\texttt{shibata.hiroki.753@s.kyushu-u.ac.jp}}}
\affil[1]{Joint Graduate School of Mathematics for Innovation, Kyushu University, Japan}

\begin{document}

\maketitle

\begin{abstract}
The \emph{shortest common superstring problem} is to find the shortest string that contains every string in a given set as a substring.
It is conjectured that the greedy algorithm that repeatedly selects a pair of strings with maximum overlap and merges them is a $2$-approximation algorithm, and this conjecture had remained open for nearly four decades.
In this paper, we disprove this conjecture and show that the approximation ratio of this algorithm is at least $9/4$.
\end{abstract}

\section{Introduction}

The \emph{shortest common superstring problem} (SCS, or the \emph{shortest superstring problem}) is the problem of finding a string of minimum length that contains every input string as a substring.
It arises in applications such as genome assembly~\cite{Myers16,PevznerTW01} and data compression~\cite{GallantMS80,CazauxR18}, and is also discussed in several textbooks~\cite{Vazirani01,Gusfield97}.
This problem is NP-hard~\cite{GallantMS80} and APX-hard even over a binary alphabet~\cite{Vassilevska05}.
Consequently, much of the research on SCS has focused on approximation algorithms and heuristics.
Since Blum et al.~\cite{BlumJLTY94} gave the first constant-factor approximation algorithm, the best-known approximation ratio for SCS has been repeatedly improved for over 30 years~\cite{ArmenS98,Sweedyk99,Mucha13,EnglertMV23,ChukhinKMS26b}, but the best possible ratio remains unknown.

The \emph{greedy algorithm} for SCS has been studied for nearly four decades~\cite{Gallant82,TarhioU88,KaplanS05}.
It repeatedly chooses a pair of strings attaining the maximum overlap and merges them until a single output string
remains.
Here, an \emph{overlap} of two strings is a string that is both a suffix of the first string and a prefix of the second.
This algorithm is extremely simple and runs in linear time~\cite{Ukkonen90}.
Furthermore, some studies report that the algorithm performs well in practice~\cite{RomeroBT04,Ma09}.
However, despite its simplicity, its exact approximation ratio is unknown.

We can easily see that the approximation ratio of the greedy algorithm is at least $2$.
Indeed, for the input set $\{ {\rm a}{\rm b}^n, {\rm b}^{n+1}, {\rm b}^n {\rm c} \}$, the greedy algorithm may begin by merging
${\rm a}{\rm b}^n$ and ${\rm b}^n {\rm c}$ under a particular tie-breaking rule, giving an approximation ratio that approaches $2$ as $n \to \infty$.
Motivated by this lower bound, Tarhio and Ukkonen~\cite{TarhioU88} conjectured that the greedy algorithm always achieves a $2$-approximation,
a statement now known as the \emph{greedy superstring conjecture}.

Since its formulation, the greedy superstring conjecture has been studied extensively~\cite{CazauxR18,KulikovSS15,Nikolaev21,Nikolaev24,ChukhinKMS26b}.
In particular, the conjecture is known to hold when all input strings have length $4$~\cite{KulikovSS15}.
Also, several other works have studied various formulations and variants related to the greedy superstring conjecture~\cite{GolovnevKLMN19,Nikolaev21,Nikolaev24}.
Very recently, new progress has been made on both sides of the approximation-ratio gap.
On the lower-bound side, Chukhin et al.~\cite{ChukhinKMS26} showed that the approximation ratio of the greedy algorithm is at least $2$ even when all input strings have length $6$.
On the upper-bound side, the same research group improved the best-known upper bound on the approximation ratio of the greedy algorithm to $3$~\cite{ChukhinKMS26b}.
Together, these advances left a gap between the lower bound of $2$ and the upper bound of $3$.
Although the conjecture had remained open for nearly four decades, no instance with an approximation ratio strictly greater than $2$ was known.

\paragraph*{Our Contribution}
In this paper, we disprove the greedy superstring conjecture.
Specifically, for every even integer $k \geq 10$, we show that the approximation ratio $\rho_k$ of the greedy algorithm for SCS instances in which all input strings have length $k$ satisfies 
\[
\rho_k \geq \frac{9k + 2}{4k + 4}.
\]
Thus, the conjecture fails even when all input strings have the same even length $k \geq 10$.
By taking the limit as $k \to \infty$, the lower bound on the approximation ratio of the greedy algorithm for unrestricted SCS instances becomes $9/4$.
 \section{Preliminaries} \label{se:preliminaries}

Let $\Sigma$ be an alphabet.
An element of $\Sigma$ is called a character.
A string $T$ of length $n = |T|$ over the alphabet $\Sigma$ is 
a sequence $T[1] \cdots T[n]$ of characters such that each $T[i]$ is an element of $\Sigma$.
Let $\varepsilon$ be the empty string of length $0$.
For any two strings $X$ and $Y$, we denote their concatenation by $XY = X[1] \cdots X[|X|] Y[1] \cdots Y[|Y|]$.
We use the same notation for concatenating a string and a character.
If $T = XYZ$ for some possibly empty strings $X, Y, Z \in \Sigma^*$, then $X, Y$, and $Z$ are called a prefix, a substring, and a suffix of $T$, respectively.
Also, $T$ is called a \emph{superstring} of $Y$.
A string that is both a prefix and a suffix of $T$ is called a \emph{border} of $T$.
For $1 \leq i \leq j \leq n$, we denote by $T[i..j] = T[i] \cdots T[j]$ the substring of $T$ from position $i$ to $j$.
For convenience, define $T[i..j] = \varepsilon$ when $j < i$.
We denote the infinite string obtained by repeating $X$ infinitely by $X^\infty = XXX \cdots$ for a nonempty string $X$.
For any nonempty string $T$ and positive integers $i$ and $\ell$, the \emph{circular substring} of length $\ell$ starting at position $i$ is defined by $\CSubstr_\ell(T, i) = T^\infty[i..i + \ell - 1]$.

For two strings $X$ and $Y$, a string is an \emph{overlap from $X$ to $Y$} if it is both a suffix of $X$ and a prefix of $Y$.
For any two strings $X$ and $Y$, let $\ov(X, Y) = \max \{ 0 \leq i \leq \min\{|X|, |Y|\} \mid X[|X| - i + 1..|X|] = Y[1..i] \}$
denote the maximum length of an overlap from $X$ to $Y$.
Using the longest overlap from $X$ to $Y$,
we define the \emph{merging operation} by $X \odot Y = XY[1 + \ov(X, Y)..|Y|]$.
By definition, $X \odot Y$ is the shortest string that contains $X$ as a prefix and $Y$ as a suffix.
For a set of strings $\calS$, let $\maxov(\calS) = \max \{ \ov(X, Y) \mid X, Y \in \calS, X \neq Y \}$.

The following two lemmas show basic properties of merging multiple strings and consecutive circular substrings.
\begin{lemma} \label{lem:overlap_after_merge}
Let $A$, $B$, and $C$ be strings.
If $\ov(A \odot B, C) < |B|$, then $\ov(A \odot B, C) = \ov(B, C)$.
Symmetrically, if $\ov(A, B \odot C) < |B|$, then $\ov(A, B \odot C) = \ov(A, B)$.
\end{lemma}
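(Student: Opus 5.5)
We prove the first claim directly and get the second by symmetry (reversing strings). Write $M = A \odot B$. By definition $M$ has $B$ as a suffix, so $M = XB$ for some string $X$. Let $\ell = \ov(M, C)$ and assume $\ell < |B|$.

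We show $\ell = \ov(B, C)$ by proving both inequalities. First take $\ell' = \ov(B, C)$. Then $B[|B|-\ell'+1..|B|] = C[1..\ell']$ and $\ell' \le \min\{|B|,|C|\}$. Since $B$ is a suffix of $M$, the last $\ell'$ characters of $M$ equal the last $\ell'$ characters of $B$. Moreover $\ell' \le |M|$, so $\ell'$ is an admissible overlap length from $M$ to $C$, and hence $\ov(M,C) \ge \ov(B,C)$. (This direction does not need the hypothesis.) Conversely, $\ell < |B|$ and $\ell \le |C|$, so the suffix of $M$ of length $\ell$ lies entirely inside the suffix $B$ of $M$. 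That suffix equals $B[|B|-\ell+1..|B|]$ and also equals $C[1..\ell]$. Thus $\ell$ is an admissible overlap length from $B$ to $C$, so $\ov(B,C) \ge \ell$. Together these give $\ov(A\odot B, C) = \ov(B,C)$.

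For the symmetric statement we argue the same way, using that $B$ is a prefix of $B \odot C$. The prefix of $B\odot C$ of length $\ov(A, B\odot C) < |B|$ is a prefix of $B$, which gives $\ov(A,B) \ge \ov(A, B\odot C)$. Conversely, any overlap from $A$ to $B$ is a prefix of $B$, hence a prefix of $B\odot C$, which gives the reverse inequality. Alternatively, we can apply the first part to the reversed strings, since reversal swaps prefixes and suffixes and maps $X \odot Y$ to $Y^R \odot X^R$.

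There is no real obstacle; the only care needed is the index bookkeeping showing that an overlap of length less than $|B|$ sits inside the copy of $B$. The strict inequality guarantees this, and the bound by $\min\{\cdot,\cdot\}$ in the definition of $\ov$ is respected because $|B| \le |A\odot B|$.
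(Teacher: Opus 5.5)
Your proof is correct and follows the paper's own argument. You get $\ov(B,C)\le\ov(A\odot B,C)$ because $B$ is a suffix of $A\odot B$, and the reverse inequality because the strict bound $\ov(A\odot B,C)<|B|$ places the overlap inside that copy of $B$; your explicit handling of the second claim (directly, or via reversal using $(X\odot Y)^R=Y^R\odot X^R$) simply spells out what the paper dismisses as ``symmetric.''
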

\begin{proof}
Let $r = \ov(A \odot B, C)$.
Since $B$ is a suffix of $A \odot B$, we have $\ov(B, C) \leq r$.
The inequality $r < |B|$ implies that the suffix of $A \odot B$ of length $r$ is also a suffix of $B$, so $r \leq \ov(B, C)$.
The two inequalities prove the first equality.
The second equality follows symmetrically.
\end{proof}

\begin{lemma} \label{lem:merging_circular_substrings}
Let $T$ be a string of length $n$ that is not unary, and let $\ell \geq n$.
For every positive integer $i$, we have
$\ov(\CSubstr_\ell(T, i), \CSubstr_\ell(T, i + 1)) = \ell - 1$.
Moreover, for every positive integer $p$, merging the $n$ circular substrings of length $\ell$ whose starting positions are $p, p + 1, \dots, p + n - 1$ in this order gives $\CSubstr_{\ell + n - 1}(T, p)$, which has $\CSubstr_{\ell - 1}(T, p)$ as a border.
\end{lemma}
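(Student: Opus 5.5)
The plan is to prove the overlap claim directly from the definition of $\CSubstr$, and then to obtain the merging claim by induction on the number of merged strings, using Lemma~\ref{lem:overlap_after_merge} in the inductive step.

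\emph{Consecutive overlaps.} Write $X_i = \CSubstr_\ell(T, i) = T^\infty[i..i+\ell-1]$.
\begin{itemize}
\item Lower bound: the suffix of $X_i$ of length $\ell-1$ is $T^\infty[i+1..i+\ell-1]$, which is exactly the prefix of $X_{i+1}$ of length $\ell-1$. Hence $\ov(X_i, X_{i+1}) \geq \ell - 1$.
\item Upper bound: since both strings have length $\ell$, we have $\ov(X_i, X_{i+1}) \leq \ell$. Equality would force $X_i = X_{i+1}$, that is, $T^\infty[j] = T^\infty[j+1]$ for all $i \leq j \leq i+\ell-1$. Then the window $T^\infty[i..i+\ell]$ is unary.
\item Contradiction: because $\ell \geq n$, this window contains $n$ consecutive characters of $T^\infty$. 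Those characters form a rotation of $T$ and so contain every character of $T$. This would make $T$ unary, contrary to the hypothesis. So the overlap is exactly $\ell-1$.
\end{itemize}
This upper-bound step, where $\ell \geq n$ and non-unarity are actually used, is the only real obstacle; everything after it is bookkeeping.

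\emph{Merging.} I will show by induction on $j = 1, \dots, n$ that merging $X_p, \dots, X_{p+j-1}$ left to right gives $M_j = \CSubstr_{\ell+j-1}(T, p)$. The case $j=1$ is trivial. For the step, set $M = M_j$, $B = X_{p+j-1}$ and $C = X_{p+j}$.
\begin{itemize}
\item By the induction hypothesis, $B$ is the length-$\ell$ suffix of $M$, and $M = A \odot B$ with $A$ the prefix $T^\infty[p..p+j-2]$ (empty if $j=1$).
\item We have $\ov(M, C) \leq |C| = \ell$. If it equaled $\ell$, then $C$ would equal the length-$\ell$ suffix of $M$, namely $B$. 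This contradicts $B \neq C$ from the first part, so $\ov(M, C) < \ell = |B|$.
\item Lemma~\ref{lem:overlap_after_merge} then gives $\ov(M, C) = \ov(B, C) = \ell - 1$.
\item Therefore $M \odot C$ is $M$ followed by the last character of $C$, which is $T^\infty[p+j-1+\ell]$. This yields $\CSubstr_{\ell+j}(T, p)$.
\end{itemize}

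\emph{Border.} The final string $\CSubstr_{\ell+n-1}(T,p)$ has length $\ell+n-1$.
\begin{itemize}
\item Its length-$(\ell-1)$ prefix is $T^\infty[p..p+\ell-2]$.
\item Its length-$(\ell-1)$ suffix is $T^\infty[p+n..p+n+\ell-2]$.
\item These coincide because $T^\infty$ has period $n$, so both equal $\CSubstr_{\ell-1}(T,p)$.
\end{itemize}
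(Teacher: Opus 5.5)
Your proof is correct and follows the paper's route: the overlap is at least $\ell-1$ from the shared window, an overlap of $\ell$ is ruled out because $\ell \geq n$ would make $T$ unary, and the border comes from the period $n$ of $T^\infty$. You also usefully make explicit, via Lemma~\ref{lem:overlap_after_merge}, the step the paper only asserts, namely that each merge extends the accumulated string by exactly one character.

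One small slip: $M_j$ need not equal $T^\infty[p..p+j-2] \odot B$, because that pair can have a positive overlap. For $T = \mathrm{aab}$, $\ell = 3$, $p = 1$, $j = 2$ one gets $\mathrm{a} \odot \mathrm{aba} = \mathrm{aba} \neq \mathrm{aaba} = M_2$. Take $A = M_{j-1}$ instead (and $A = \varepsilon$ for $j = 1$), which works by the definition of left-to-right merging. Alternatively, note directly that the length-$(\ell-1)$ suffix of $M$ equals the length-$(\ell-1)$ prefix of $C$.
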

\begin{proof}
For every positive integer $i$, the definitions of circular substrings give
$\CSubstr_\ell(T, i)[2..\ell]
= T^\infty[i + 1..i + \ell - 1]
= \CSubstr_\ell(T, i + 1)[1..\ell - 1]$.
Hence, the overlap length is at least $\ell - 1$.
If the overlap length were $\ell$, the two circular substrings would be equal.
Since $\ell \geq n$, this equality would imply that the $n$ consecutive characters $T^\infty[i], T^\infty[i + 1], \dots, T^\infty[i + n - 1]$ are all equal, so $T$ would be unary.
Since $T$ is not unary, the overlap length is $\ell - 1$.

Fix a positive integer $p$.
The first circular substring is $\CSubstr_\ell(T, p) = T^\infty[p..p + \ell - 1]$.
Each of the remaining $n - 1$ merges extends this substring of $T^\infty$ by one character, giving $T^\infty[p..p + \ell + n - 2] = \CSubstr_{\ell + n - 1}(T, p)$.
Since $T^\infty[p..p + \ell - 2] = T^\infty[p + n..p + n + \ell - 2] = \CSubstr_{\ell - 1}(T, p)$, the resulting string has $\CSubstr_{\ell - 1}(T, p)$ as a border.
\end{proof}

For a set of strings $\calS = \{ S_1, \dots, S_m\}$,
a string $T$ is a \emph{common superstring} of $\calS$ if 
$T$ is a superstring of $S_i$ for all $S_i \in \calS$.
The \emph{shortest common superstring problem (SCS)} is to find a common superstring of minimum length for a given set of strings $\calS = \{ S_1, \dots, S_m\}$.
If the input $\calS$ contains a string $S \in \calS$ that is a substring of another string in $\calS$, then we can remove $S$ from $\calS$ without changing the answer.
Thus, we assume that no input string is a substring of another input string.
We denote the minimum length of a common superstring for an input set $\calS$ by $\OPT(\calS)$.

The \emph{greedy algorithm} for the SCS problem is defined recursively for an input $\calS$ as follows.
\begin{enumerate}
    \item If $|\calS| = 1$, then output the unique element of $\calS$.
    \item If $|\calS| > 1$, then find a pair of distinct strings $X, Y \in \calS$ such that $\ov(X, Y) = \maxov(\calS)$ and output the result of the greedy algorithm on the input $(\calS \cup \{X \odot Y\}) \setminus \{ X, Y \}$.
\end{enumerate}
If more than one ordered pair $X, Y \in \calS$ satisfies the condition $\ov(X, Y) = \maxov(\calS)$, we may select any such pair.
Therefore, there may be multiple outputs for the same input set.
We denote the set of output strings produced by the greedy algorithm above by $\GREEDY(\calS)$.

The following lemma shows that the maximum overlap length does not increase after one step of the greedy algorithm.
\begin{lemma} \label{lem:maxov_after_merge}
Let $\calS$ be a set of strings such that $|\calS| \geq 3$ and no element of $\calS$ is a substring of another element.
Let $X, Y \in \calS$ be distinct strings satisfying $\ov(X, Y) = \maxov(\calS)$, 
and define $\calS' = (\calS \setminus \{X, Y\}) \cup \{X \odot Y\}$.
Then, no element of $\calS'$ is a substring of another element, and $\maxov(\calS') \leq \maxov(\calS)$.
\end{lemma}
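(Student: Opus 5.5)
Write $r = \maxov(\calS) = \ov(X, Y)$ and $M = X \odot Y$. We prove the two claims separately. Throughout we use that $r < \min\{|X|, |Y|\}$, since otherwise one of $X, Y$ would be a substring of the other. Hence $M$ has length $|X| + |Y| - r$, with $X$ as a proper prefix and $Y$ as a proper suffix.

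\textbf{Substring-freeness.} Elements of $\calS \setminus \{X, Y\}$ are pairwise non-substrings by hypothesis, and $M$ is not a substring of any $Z \in \calS \setminus \{X, Y\}$, since otherwise $X$, a substring of $M$, would be a substring of $Z$. The remaining case is a string $Z \in \calS \setminus \{X, Y\}$ that is a substring of $M$; this is the main obstacle. Write $M = X Y[r+1..|Y|]$ and locate an occurrence of $Z$ in $M$. If it lies entirely within the prefix $X$ or entirely within the suffix $Y$, we contradict substring-freeness of $\calS$. Otherwise the occurrence starts before $Y$ begins and ends after $X$ ends. Then the part of $Z$ before the end of $X$ is a suffix of $X$ that is a prefix of $Z$. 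This part contains the whole overlap region of $X$ and $Y$ in $M$ together with at least one more character of $X$, so its length exceeds $r$. Hence $\ov(X, Z) > r = \maxov(\calS)$. Since $Z \neq X$, this contradicts the maximality of $r$.

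\textbf{The bound on $\maxov$.} Consider a pair of distinct elements of $\calS'$. If neither element is $M$, their overlap is bounded by $\maxov(\calS)$ directly. For $\ov(M, C)$ with $C \in \calS \setminus \{X, Y\}$, suppose $\ov(M, C) \geq |Y|$. Then $Y$ is a suffix of $M$, hence of the overlap, hence a prefix part of $C$, so $Y$ is a substring of $C$, which is a contradiction. Therefore $\ov(M, C) < |Y|$, and Lemma~\ref{lem:overlap_after_merge} gives $\ov(M, C) = \ov(Y, C) \leq r$. Symmetrically, for $\ov(C, M)$ with $C \in \calS \setminus \{X, Y\}$, the case $\ov(C, M) \geq |X|$ would make $X$ a substring of $C$. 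So $\ov(C, M) < |X|$, and the second part of Lemma~\ref{lem:overlap_after_merge} with $A = C$, $B = X$ gives $\ov(C, M) = \ov(C, X) \leq r$.

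Pairs with $M$ exhaust the new cases. The condition $|\calS| \geq 3$ only guarantees that $\calS'$ has at least two elements, so that $\maxov(\calS')$ is defined.
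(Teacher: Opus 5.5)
Your proof is correct and follows essentially the same route as the paper. Substring-freeness comes from the straddling-occurrence argument, which forces $\ov(X, Z) > \maxov(\calS)$; the overlap bound comes from showing $\ov(M, C) < |Y|$ and $\ov(C, M) < |X|$ and then applying Lemma~\ref{lem:overlap_after_merge}, with the small bonus that you make explicit the fact $\maxov(\calS) < \min\{|X|, |Y|\}$, which the paper uses only implicitly.
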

\begin{proof}
Let $m = \maxov(\calS)$ and $Z = X \odot Y$.
We first show that no element of $\calS'$ is a substring of another element.
The string $Z$ cannot be a substring of any $A \in \calS \setminus \{X, Y\}$ because $X$ is a prefix of $Z$ and is not a substring of $A$.
Suppose that some $A \in \calS \setminus \{X, Y\}$ is a substring of $Z$.
Write $X = X_0 C$ and $Y = C Y_0$, where $|C| = m$, so that $Z = X_0 C Y_0$.
Since $A$ is not a substring of $X$ or $Y$, every occurrence of $A$ in $Z$ begins in $X_0$ and ends in $Y_0$.
A prefix of $A$ is a suffix of $X$ that contains $C$ and at least one character of $X_0$.
This prefix has length greater than $m$, so $\ov(X, A) > m$, contradicting the definition of $m$.
Thus, no element of $\calS'$ is a substring of another element.

It remains to show $\maxov(\calS') \leq m$.
Fix $A \in \calS \setminus \{X, Y\}$.
The assumption that $Y$ is not a substring of $A$ implies $\ov(Z, A) < |Y|$, so Lemma~\ref{lem:overlap_after_merge} gives $\ov(Z, A) = \ov(Y, A) \leq m$.
Similarly, $X$ is not a substring of $A$, so $\ov(A, Z) < |X|$ and $\ov(A, Z) = \ov(A, X) \leq m$.
For any distinct $A, B \in \calS \setminus \{X, Y\}$, the definition of $m$ gives $\ov(A, B) \leq m$.
Together, these inequalities give $\maxov(\calS') \leq m$.
\end{proof}

The repeated application of Lemma~\ref{lem:maxov_after_merge} gives the following monotonicity of the maximum overlap length.
\begin{corollary} \label{cor:maxov_monotonicity}
Let $\calS^{(0)}$ be a set of strings such that no element is a substring of another element.
For every $0 \leq h < r$, let $\calS^{(h + 1)}$ be obtained from $\calS^{(h)}$ by one step of the greedy algorithm, and assume that $|\calS^{(r)}| \geq 2$.
For any $0 \leq i < j \leq r$, we have $\maxov(\calS^{(i)}) \geq \maxov(\calS^{(j)})$.
\end{corollary}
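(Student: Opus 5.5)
The plan is to prove the corollary by induction on $h$, applying Lemma~\ref{lem:maxov_after_merge} once per greedy step and then using transitivity of $\geq$. The induction hypothesis is: for each $0 \leq h \leq r$, no element of $\calS^{(h)}$ is a substring of another element. The base case $h = 0$ is exactly the assumption on $\calS^{(0)}$.

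For the inductive step from $h$ to $h+1$ with $h < r$, I first check that Lemma~\ref{lem:maxov_after_merge} applies to $\calS^{(h)}$. Each greedy step reduces the number of strings by exactly one, since two distinct strings are replaced by one merged string. Therefore $|\calS^{(h)}| = |\calS^{(r)}| + (r - h) \geq 2 + 1 = 3$. By the definition of the greedy algorithm, $\calS^{(h+1)}$ arises from a pair $X, Y$ with $\ov(X, Y) = \maxov(\calS^{(h)})$. The merged string $X \odot Y$ is new, or if it coincides with an existing element, that element would be a substring of itself; this degenerate case is excluded by the lemma's conclusion.

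Lemma~\ref{lem:maxov_after_merge} then yields two facts at once. First, $\calS^{(h+1)}$ is again substring-free, which propagates the induction hypothesis. Second, $\maxov(\calS^{(h+1)}) \leq \maxov(\calS^{(h)})$.

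Finally, for $0 \leq i < j \leq r$, I chain these one-step inequalities:
\[
\maxov(\calS^{(i)}) \geq \maxov(\calS^{(i+1)}) \geq \cdots \geq \maxov(\calS^{(j)}).
\]
Every $\maxov$ in this chain is well defined, because each set has at least two elements.

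The only mildly delicate point is the cardinality bookkeeping. The hypothesis $|\calS^{(r)}| \geq 2$ is precisely what guarantees $|\calS^{(h)}| \geq 3$ for every $h < r$, which Lemma~\ref{lem:maxov_after_merge} requires. Everything else is direct.
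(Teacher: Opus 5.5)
Your proposal is correct and matches the paper's argument, which obtains the corollary by repeatedly applying Lemma~\ref{lem:maxov_after_merge}, propagating substring-freeness and using $|\calS^{(r)}| \geq 2$ to get $|\calS^{(h)}| \geq 3$ for every $h < r$. One tidy-up: you do not need the count to drop by exactly one, and your appeal to the lemma's conclusion to rule out $X \odot Y$ coinciding with an existing element is muddled, because the inequality $|\calS^{(h+1)}| \leq |\calS^{(h)}| - 1$ holds unconditionally and already gives $|\calS^{(h)}| \geq |\calS^{(r)}| + (r - h) \geq 3$.
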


In this paper, we consider the approximation ratio of the greedy algorithm for the \emph{$k$-SCS problem}, in which every input string has length $k$.
For every $k \geq 1$, we denote the approximation ratio for the $k$-SCS problem by
\[
\rho_k
=
\sup_{\substack{\Sigma \text{ finite} \\ \varnothing \neq \calS \subseteq \Sigma^k}}
\sup_{T \in \GREEDY(\calS)}
\frac{|T|}{\OPT(\calS)}.
\]
 \section{Construction and Proof Outline} \label{se:construction}
\subsection{Construction of the Instance}
In this section, we construct an instance of the $k$-SCS problem for every even $k \geq 10$.

Fix an even integer $k = 2s \geq 10$.
Let $t \geq 1$ be an arbitrary positive integer.
We construct the instance over the alphabet $\Sigma = \{ \mathrm{x}, \mathrm{y}, \mathrm{a}, \mathrm{b}_1, \dots, \mathrm{b}_t \}$.
For any $n \geq 0$, define $X_n = (\mathrm{yx})^{n / 2}$ if $n$ is even and $X_n = \mathrm{x}(\mathrm{yx})^{(n - 1) / 2}$ if $n$ is odd.
For every $1 \leq i \leq t$, let $U_i$, $V_i$, and $W_i$ be defined by
\[
U_i = \mathrm{xa}X_{s - 3} \mathrm{b}_i \mathrm{xa}X_{s - 3} \mathrm{y}, \qquad
V_i = \mathrm{xa}X_{s - 3} \mathrm{b}_i, \qquad
W_i = \mathrm{xa}X_{s - 1} \mathrm{b}_i.
\]
For every $1 \leq i \leq t$, define
\[
U_{i,j} = \CSubstr_k(U_i, j) \qquad (1 \leq j \leq k), \qquad
V_{i,j} = \CSubstr_k(V_i, j) \qquad (1 \leq j \leq s),
\]
and
\[
W_{i,j} = \CSubstr_k(W_i, j) \qquad (1 \leq j \leq s + 2).
\]
Finally, we define the input set $\calS_{k,t}$ by
\[
\calS_{k,t}
=
\bigcup_{i=1}^t
\left(
    \left\{ U_{i,j} \mid 1 \leq j \leq k \right\}
    \cup \left\{ V_{i,j} \mid 1 \leq j \leq s \right\}
    \cup \left\{ W_{i,j} \mid 1 \leq j \leq s + 2 \right\}
\right).
\]
Figure~\ref{fig:construction-strings} illustrates the constructed strings.
\begin{figure}[t]
\centering
\includegraphics[width=\textwidth]{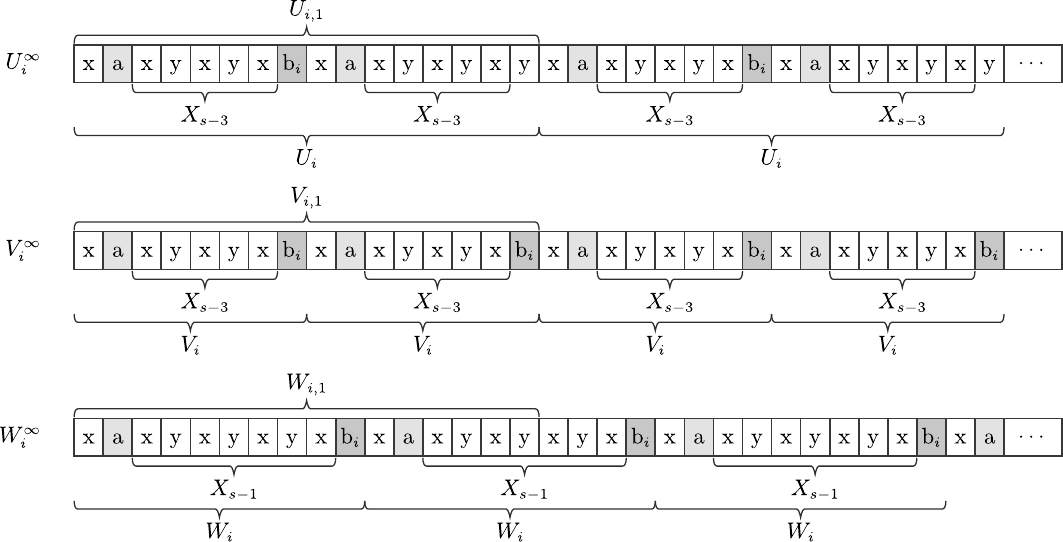}
\caption{
An illustration of prefixes of the infinite periodic strings $U_i^\infty$, $V_i^\infty$, and $W_i^\infty$ for $s = 8$.
Each cell represents one character.
The cells containing $\mathrm{a}$ and $\mathrm{b}_i$ are shaded light gray and dark gray, respectively.
The overbraces mark $U_{i,1}$, $V_{i,1}$, and $W_{i,1}$.
The underbraces mark occurrences of $X_{s - 3}$ and $X_{s - 1}$ as well as complete occurrences of $U_i$, $V_i$, and $W_i$.}
\label{fig:construction-strings}
\end{figure}

We first verify that all strings in the construction are pairwise distinct.
\begin{lemma} \label{lem:distinct}
The strings $U_{i,j}$ with $1 \leq j \leq k$, $V_{i,j}$ with $1 \leq j \leq s$, and $W_{i,j}$ with $1 \leq j \leq s + 2$ are pairwise distinct across all $1 \leq i \leq t$.
\end{lemma}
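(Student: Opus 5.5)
Each of the strings in question contains at most one occurrence of the letter $\mathrm{b}_i$ carrying its index, and this occurrence pins down both the family and the starting offset. The plan is to use that occurrence as an invariant. First, the strings $U_i$, $V_i$, $W_i$ have lengths $k$, $s$, and $s+2$, respectively. The relevant counts are:
\begin{itemize}
\item $U_i$ contains exactly one $\mathrm{b}_i$ and two occurrences of $\mathrm{a}$;
\item $V_i$ contains one $\mathrm{b}_i$ and one $\mathrm{a}$;
\item $W_i$ contains one $\mathrm{b}_i$ and one $\mathrm{a}$.
\end{itemize}
A circular substring of length $k$ of $U_i$ is a rotation of $U_i$. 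A circular substring of length $k = 2s$ of $V_i$ or $W_i$ spans at least one full period, since $k \geq s+2$ when $s \geq 2$. Hence every $U_{i,j}$, $V_{i,j}$, $W_{i,j}$ contains the letter $\mathrm{b}_i$ and no $\mathrm{b}_{i'}$ with $i' \neq i$. This already separates strings with different indices $i$, so we may fix $i$.

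Next, we distinguish the three families by the number of $\mathrm{b}_i$'s.
\begin{itemize}
\item $U_{i,j}$ is a rotation of $U_i$, so it contains exactly one $\mathrm{b}_i$.
\item $V_{i,j}$ has length $2s$ over a period of length $s$, so it contains exactly two $\mathrm{b}_i$'s.
\item $W_{i,j}$ has length $2s$ over a period of length $s+2$. It contains two $\mathrm{b}_i$'s exactly when the $\mathrm{b}_i$ occurs among the first $s-2$ positions, and one otherwise.
\end{itemize}
So $V$ is separated from $U$ by this count. The $V$-versus-$W$ case needs a different invariant: the gap between consecutive $\mathrm{b}_i$'s is $s$ in $V_{i,j}$ and $s+2$ in $W_{i,j}$. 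For $U$ versus a $W$ with one $\mathrm{b}_i$, we count $\mathrm{a}$'s instead. A rotation of $U_i$ has two $\mathrm{a}$'s. A window of $W_i^\infty$ of length $2s$ containing only one $\mathrm{b}_i$ has the $\mathrm{b}_i$ in its last $s+4$ positions. The $\mathrm{a}$'s of $W_i^\infty$ are spaced $s+2$ apart and sit two positions after each $\mathrm{b}_i$ (pattern $\mathrm{b}_i\mathrm{xa}$). This count needs care, so alternatively one can compare the distance from $\mathrm{a}$ to $\mathrm{b}_i$. In $U_i$ and $V_i$, the $\mathrm{a}$ sits $s-2$ positions before $\mathrm{b}_i$ (via $X_{s-3}$). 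In $W_i$, the gap is $s$ (via $X_{s-1}$), and the distance from $\mathrm{b}_i$ forward to the next $\mathrm{a}$ is $2$. I would use whichever of these local patterns appears fully in the window.

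Finally, within a family, distinct $j$ give distinct strings. For $U$ and $W$, the position of the first $\mathrm{b}_i$ in $\CSubstr_k(T,j)$ is determined by $j$ modulo the period $|T|$. For $j$ in the stated ranges (at most the period length), these positions differ, so the strings differ. For $V$, the range $1 \le j \le s$ is exactly one period, and the same argument applies.

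The main obstacle is the $U$-versus-$W$ comparison when the $W$-window contains only one $\mathrm{b}_i$. There I would explicitly check, for each such window, that the window must contain an $\mathrm{a}$ at distance $2$ after $\mathrm{b}_i$, or an $\mathrm{a}$ at distance $s$ before it. I would then show neither pattern occurs in any rotation of $U_i$: in rotations of $U_i$, the character after $\mathrm{b}_i\mathrm{x}$ is $\mathrm{a}$ too, so that test fails. Since that pattern is shared, I would instead rely on the $\mathrm{a}$-to-$\mathrm{b}_i$ distance together with the total $\mathrm{a}$ count, using $s \ge 5$ to guarantee the needed characters lie inside the window.
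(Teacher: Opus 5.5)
\textbf{Verdict: there is a gap in one case; the rest of your argument is correct.} You correctly separate different indices $i$, $U$ from $V$ (one versus two $\mathrm{b}_i$'s), $V$ from $W$ (by count, or by gap $s$ versus $s+2$), $U$ from a $W_{i,j}$ with two $\mathrm{b}_i$'s, and distinct $j$ within each family.

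The gap is the case you flag and then leave open: $U_{i,j'}$ versus a $W_{i,j}$ with a single $\mathrm{b}_i$. This occurs for $1 \le j \le 4$, with the $\mathrm{b}_i$ at window position $p \in \{s-1,s,s+1,s+2\}$. None of your three attempts at it goes through:
\begin{itemize}
\item Your first concrete test fails, as you notice yourself. The pattern $\mathrm{b}_i\mathrm{xa}$ also occurs in $U_i^\infty$, so the intended claim that ``neither pattern occurs in any rotation of $U_i$'' is false.
\item Counting $\mathrm{a}$'s alone also fails. For $p \in \{s+1,s+2\}$ the $W$-window contains two $\mathrm{a}$'s, just like every $U_{i,j'}$.
\item The final fallback (``the $\mathrm{a}$-to-$\mathrm{b}_i$ distance together with the total $\mathrm{a}$ count'') is only named, never checked.
\end{itemize}
So, as written, the one non-trivial comparison in the lemma is not proved.

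The missing observation is the \emph{spacing} of $\mathrm{a}$ in $U_i^\infty$. The two $\mathrm{a}$'s of $U_i$ sit at positions $2$ and $s+2$. So in $U_i^\infty$, as in $V_i^\infty$, the $\mathrm{a}$'s occur exactly every $s$ positions. Hence every $U_{i,j'}$ and every $V_{i,j'}$ contains exactly two $\mathrm{a}$'s, at distance $s$. By contrast, since $s+2 \le 2s < 2(s+2)$, every $W_{i,j}$ contains one or two $\mathrm{a}$'s, at distance $s+2$ when there are two. This single invariant is the paper's argument. It separates $W$ from both $U$ and $V$ without any case split on the number of $\mathrm{b}_i$'s. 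You already had half of it when you noted that the $\mathrm{a}$'s of $W_i^\infty$ are $s+2$ apart.

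Your fallback can also be completed directly, in two cases:
\begin{itemize}
\item For $p \in \{s-1,s\}$, the $W$-window has a single $\mathrm{a}$ (at position $p+2$), while every $U_{i,j'}$ has two.
\item For $p \in \{s+1,s+2\}$, the $W$-window has an $\mathrm{a}$ exactly $s$ positions before its $\mathrm{b}_i$. This never happens in $U_i^\infty$, where the $\mathrm{a}$'s preceding a $\mathrm{b}_i$ lie at distances $s-2, 2s-2, \dots$.
\end{itemize}
This verification is precisely what the proposal omits.
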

\begin{proof}
The strings $U_i$, $V_i$, and $W_i$ have lengths $k$, $s$, and $s + 2$, respectively, and each contains exactly one occurrence of $\mathrm{b}_i$.
Since these lengths are at most $k$, every circular substring of length $k$ generated from $U_i$, $V_i$, or $W_i$ contains at least one occurrence of $\mathrm{b}_i$.
For every $i' \neq i$, the character $\mathrm{b}_i$ does not occur in $U_{i'}$, $V_{i'}$, or $W_{i'}$, nor in any of their circular substrings.
Hence, strings with different values of $i$ are distinct.

We now fix $i$ and compare the strings indexed by $i$.
The string $U_i$ has length $k$ and contains exactly one occurrence of $\mathrm{b}_i$.
The position of this occurrence in $U_{i,j}$ uniquely determines $j$, so the strings $U_{i,1}, \dots, U_{i,k}$ are pairwise distinct.
Since $|V_i| = s$ and $k = 2s$, every $V_{i,j}$ contains exactly two occurrences of $\mathrm{b}_i$, and their positions differ by $s$.
The position of the first occurrence uniquely determines $j$, so the strings $V_{i,1}, \dots, V_{i,s}$ are pairwise distinct.
Moreover, the number of occurrences of $\mathrm{b}_i$ distinguishes every $U_{i,j}$ from every $V_{i,j'}$.

The string $W_i$ has length $s + 2$ and contains exactly one occurrence of $\mathrm{a}$.
Because $s + 2 < k < 2(s+2)$, every $W_{i,j}$ contains either one or two occurrences of $\mathrm{a}$.
If it contains two, their positions differ by $s + 2$.
The position of the first occurrence of $\mathrm{a}$ uniquely determines $j$, so the strings $W_{i,1}, \dots, W_{i,s+2}$ are pairwise distinct.
On the other hand, every $U_{i,j}$ and every $V_{i,j}$ contains exactly two occurrences of $\mathrm{a}$, and their positions differ by $s$.
Therefore, no $W_{i,j}$ is equal to a string of the form $U_{i,j'}$ or $V_{i,j'}$.
\end{proof}

By Lemma~\ref{lem:distinct}, all strings listed in the definition of $\calS_{k,t}$ are distinct.
Hence, $|\calS_{k,t}| = t\bigl(k + s + (s + 2)\bigr) = t(4s + 2)$.
Every string in $\calS_{k,t}$ has length exactly $k$, so the total length of all strings in $\calS_{k,t}$ is $kt(4s + 2)$.
We use $\calS_{k,t}$ as the input instance in the following sections.

\subsection{Proof Outline}
Before giving the complete proof, we provide
an overview of the proof
and explain why this construction works.

As shown in Lemma~\ref{lem:distinct}, all length-$k$ circular substrings of $U_i$, $V_i$, and $W_i$ are distinct.
On the other hand, each string has some length-$(k-1)$ circular substrings in common with the other strings~(Lemma~\ref{lem:length_k1_substrings}).
Because of this structure, the de Bruijn graph~\cite{deBruijn46,Moreno05}
of order $k-1$ constructed from all length-$k$ circular substrings has an Eulerian circuit~(Lemma~\ref{lem:eulerian_circuit}).
We can construct an optimal solution to the SCS instance $\calS_{k,t}$ using such an Eulerian circuit~(Theorem~\ref{thm:optval}).

In the lower-bound construction,
we first merge the length-$k$ circular substrings generated from each $U_i$, $V_i$, and $W_i$ and
obtain $3t$ strings corresponding to $U_i$, $V_i$, and $W_i$ $(1 \leq i \leq t)$.
The resulting strings are length-$(m + k - 1)$ circular substrings of the corresponding strings,
where $m$ is the number of circular substrings merged into a given resulting string.
Each resulting string has a length-$(k-1)$ circular substring as a border~(Lemma~\ref{lem:merging_circular_substrings}).
By carefully selecting the first circular substring in each sequence of merges, we restrict the pairs of borders with long overlaps~(Lemma~\ref{lem:border_overlaps}).
Because of these restrictions, the remaining merge operations use shorter overlaps and produce a long output string~(Theorem~\ref{thm:greedy_output_length}).
 \section{Optimal Value of the Instance} \label{se:optimal}

In this section, we determine $\OPT(\calS_{k,t})$ by proving matching lower and upper bounds.

We first derive a lower bound from the number of distinct input strings.
\begin{lemma} \label{lem:opt_lower_bound}
$\OPT(\calS_{k,t}) \geq t(4s + 2) + k - 1$.
\end{lemma}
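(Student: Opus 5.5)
The bound follows from counting starting positions. Every input string has length exactly $k$, and by Lemma~\ref{lem:distinct} the set $\calS_{k,t}$ contains exactly $t(4s+2)$ pairwise distinct strings. The idea is to show that any common superstring must have at least $t(4s+2)$ distinct starting positions, each beginning a full length-$k$ window. This forces the length to be at least $t(4s+2) + k - 1$.

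First, fix an arbitrary common superstring $T$ of $\calS_{k,t}$. For each $S \in \calS_{k,t}$, choose one occurrence of $S$ in $T$ and let $p(S)$ be its starting position, so that $T[p(S)..p(S)+k-1] = S$. This defines a map $p$ from $\calS_{k,t}$ into the positions $\{1, \dots, |T| - k + 1\}$.

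Next, observe that $p$ is injective. If $p(S) = p(S')$, then $S$ and $S'$ are both the length-$k$ substring of $T$ starting at that position, so $S = S'$. It follows that
\[
|T| - k + 1 \;\geq\; |\calS_{k,t}| = t(4s+2),
\]
which gives $|T| \geq t(4s+2) + k - 1$. Since $T$ was arbitrary, the same bound holds for $\OPT(\calS_{k,t})$.

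The only nontrivial input is the exact count $|\calS_{k,t}| = t(4s+2)$, which rests on the pairwise distinctness in Lemma~\ref{lem:distinct}. That lemma is already established, so I do not expect any step here to be an obstacle. The real difficulty lies in the matching upper bound, which is where the Eulerian circuit argument comes in.
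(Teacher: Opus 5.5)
Your proof is correct and is essentially the paper's argument. The paper notes that $T$ has at most $|T|-k+1$ distinct length-$k$ substrings, and that all $t(4s+2)$ pairwise distinct inputs (by Lemma~\ref{lem:distinct}) must be among them; your injective map to starting positions is the same count made explicit.
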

\begin{proof}
Let $T$ be an arbitrary common superstring of $\calS_{k,t}$.
By Lemma~\ref{lem:distinct}, the elements of $\calS_{k,t}$ are distinct strings of length $k$.
The string $T$ contains at most $|T| - k + 1$ distinct substrings of length $k$.
Since every element of $\calS_{k,t}$ occurs in $T$, we have $|\calS_{k,t}| \leq |T| - k + 1$.
The definition of $\calS_{k,t}$ gives $|\calS_{k,t}| = t(k + s + (s + 2)) = t(4s + 2)$.
Therefore, $|T| \geq t(4s + 2) + k - 1$.
Thus, $\OPT(\calS_{k,t}) \geq t(4s + 2) + k - 1$.
\end{proof}

For the matching upper bound, we construct the de Bruijn graph $G$ from $\calS_{k,t}$~\cite{deBruijn46,Moreno05} and use an Eulerian circuit in $G$ to obtain a common superstring.
The vertex set of $G$ consists of all length-$(k - 1)$ substrings of the strings in $\calS_{k,t}$.
For every $S \in \calS_{k,t}$, the graph $G$ has a directed edge from $S[1..k - 1]$ to $S[2..k]$ with label $S[k]$.
For every $1 \leq i \leq t$ and $X \in \{U_i, V_i, W_i\}$, the edges corresponding to $\CSubstr_k(X, j)$ for $1 \leq j \leq |X|$ form a directed circuit $C_X$.
Since $\calS_{k,t}$ consists of the length-$k$ circular substrings of $U_i$, $V_i$, and $W_i$ over all $1 \leq i \leq t$, the edge set of $G$ is the union of the edges of $C_{U_i}$, $C_{V_i}$, and $C_{W_i}$ over the same indices.
By Lemma~\ref{lem:distinct}, these circular substrings are pairwise distinct, so each edge of $G$ belongs to exactly one of these circuits.
Any two distinct strings of length $k$ differ in their length-$(k - 1)$ prefix or suffix, so $G$ has no parallel edges.

The following lemma defines $\alpha_i$, $\beta_i$, and $\gamma$ and identifies their occurrences in the infinite periodic strings.
\begin{lemma} \label{lem:length_k1_substrings}
For every $1 \leq i \leq t$, define $\alpha_i = \mathrm{xa}X_{s - 3} \mathrm{b}_i \mathrm{xa}X_{s - 3}$.
Define $\beta_i = X_{s - 3} \mathrm{b}_i \mathrm{xa}X_{s - 1}$.
We also define $\gamma = \mathrm{xa}X_{s - 3} \mathrm{yxa}X_{s - 3}$.
The following identities hold for every $1 \leq i \leq t$.
\begin{enumerate}
    \item $\alpha_i = \CSubstr_{k - 1}(U_i, 1) = \CSubstr_{k - 1}(V_i, 1)$.
    \item $\beta_i = \CSubstr_{k - 1}(U_i, 3) = \CSubstr_{k - 1}(W_i, 5)$.
    \item $\gamma = \CSubstr_{k - 1}(U_i, s + 1)$.
\end{enumerate}
\end{lemma}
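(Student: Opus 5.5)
This is a direct verification from the definitions. The only tool needed is that $\CSubstr_{k-1}(T,i)$ is the length-$(k-1)$ window of $T^\infty$ starting at position $i$. For each identity I will write out enough of the relevant periodic string and read off the window. Two facts about the strings $X_n$ drive the computation.

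First, $|X_n| = n$, and $X_n$ is the length-$n$ suffix of $(\mathrm{yx})^\infty$-type strings ending in $\mathrm{x}$. In particular $X_{n+2} = \mathrm{yx}X_n$ and $X_n\mathrm{yx} = X_{n+2}$. The second identity holds because both sides alternate, end in $\mathrm{x}$, and have length $n+2$; one checks it separately for even and odd $n$. The length bookkeeping is $|U_i| = 2 + (s-3) + 1 + 2 + (s-3) + 1 = 2s = k$, $|V_i| = s$, $|W_i| = s+2$, and $|\alpha_i| = |\beta_i| = |\gamma| = 2s-1 = k-1$.

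For item 1, the window $\CSubstr_{k-1}(U_i,1)$ is just $U_i$ with its last character $\mathrm{y}$ removed, which is $\alpha_i$. For $V_i$, the infinite string begins $V_iV_i = \mathrm{xa}X_{s-3}\mathrm{b}_i\mathrm{xa}X_{s-3}\mathrm{b}_i$. Its first $2s-1$ characters drop the final $\mathrm{b}_i$, which again gives $\alpha_i$.

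For item 2, $\CSubstr_{k-1}(U_i,3)$ starts after the initial $\mathrm{xa}$ and reads $X_{s-3}\mathrm{b}_i\mathrm{xa}X_{s-3}\mathrm{y}$, followed by the first character $\mathrm{x}$ of the next copy. That gives $X_{s-3}\mathrm{b}_i\mathrm{xa}X_{s-3}\mathrm{yx}$, and by $X_{s-3}\mathrm{yx} = X_{s-1}$ this equals $\beta_i$. For $W_i$, position $5$ lies inside $X_{s-1}$. Writing $X_{s-1} = \mathrm{yx}X_{s-3}$ and reading from its third character gives $X_{s-3}\mathrm{b}_i$, followed by $\mathrm{xa}X_{s-1}$ from the next copy. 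The total length is $(s-3)+1+2+(s-1) = 2s-1$, so the window equals $\beta_i$ exactly.

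For item 3, position $s+1$ of $U_i$ is the $\mathrm{x}$ right after $\mathrm{b}_i$. Since $U_i[1..s] = \mathrm{xa}X_{s-3}\mathrm{b}_i$, the window reads $\mathrm{xa}X_{s-3}\mathrm{y}$ and then the first $s-1$ characters of the next copy, namely $\mathrm{xa}X_{s-3}$. That is precisely $\gamma$.

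The only delicate point is the identity $X_n\mathrm{yx} = X_{n+2} = \mathrm{yx}X_n$ together with the offsets (starting positions $3$, $5$, $s+1$). I will handle both by an explicit parity check on $n$ and by tracking lengths, so that each window ends exactly where claimed.
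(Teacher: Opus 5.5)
Your plan is the same direct verification the paper uses (write out the periodic strings and read off each window). Items~1 and~3, the length bookkeeping, and the $U_i$ half of item~2 are all correct. In particular, $X_n\mathrm{yx} = X_{n+2}$ holds for every $n$, exactly as the paper uses it.

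However, the other identity you assert, $X_{n+2} = \mathrm{yx}X_n$, is false whenever $n$ is odd. In that case $X_n$ begins with $\mathrm{x}$, so $\mathrm{yx}X_n$ contains $\mathrm{xx}$; for example, $\mathrm{yx}X_1 = \mathrm{yxx} \neq \mathrm{xyx} = X_3$. The correct statement is $X_{n+2} = \mathrm{yx}X_n$ for even $n$ and $X_{n+2} = \mathrm{xy}X_n$ for odd $n$. Your suffix description only says that the last $n$ characters of $X_{n+2}$ form $X_n$; it does not fix the first two characters. You invoke the false identity as $X_{s-1} = \mathrm{yx}X_{s-3}$ in the $W_i$ half of item~2, and there it fails for every even $s$. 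For instance, with $s = 6$ one has $\mathrm{yx}X_3 = \mathrm{yxxyx}$ while $X_5 = \mathrm{xyxyx}$. The explicit parity check you promise would have exposed this rather than confirmed it.

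The repair is one line. You only need
\[
W_i[5..s+2] = X_{s-1}[3..s-1]\,\mathrm{b}_i = X_{s-3}\mathrm{b}_i .
\]
The equality $X_{s-1}[3..s-1] = X_{s-3}$ holds for every $s$, because $X_{s-1}$ and $X_{s-3}$ are both alternating strings over $\{\mathrm{x},\mathrm{y}\}$ ending in $\mathrm{x}$. Hence $X_{s-3}$ is the length-$(s-3)$ suffix of $X_{s-1}$. This is the suffix property you already stated, not the false consequence you drew from it. With that substitution, your computation $W_i[5..s+2]\,W_i[1..s+1] = X_{s-3}\mathrm{b}_i\mathrm{xa}X_{s-1} = \beta_i$ is exactly the paper's.
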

\begin{proof}
We first note that the three infinite strings have the following structure:
\[
\begin{aligned}
U_i^\infty
&= \mathrm{xa}X_{s - 3} \mathrm{b}_i \mathrm{xa}X_{s - 3} \mathrm{yxa}X_{s - 3} \cdots, \\
V_i^\infty
&= \mathrm{xa}X_{s - 3} \mathrm{b}_i \mathrm{xa}X_{s - 3} \mathrm{b}_i \cdots, \\
W_i^\infty
&= \mathrm{xa}X_{s - 1} \mathrm{b}_i \mathrm{xa}X_{s - 1} \mathrm{b}_i \cdots.
\end{aligned}
\]
The prefixes of $U_i^\infty$ and $V_i^\infty$ of length $k - 1$ are both $\alpha_i = \mathrm{xa}X_{s - 3} \mathrm{b}_i \mathrm{xa}X_{s - 3}$, which proves the first identity.

We consider the two circular substrings in the second identity separately.
Since $|U_i| = k$, we have $\CSubstr_{k - 1}(U_i, 3) = U_i[3..k]U_i[1]$.
Since $|W_i| = s + 2$, the circular substring $\CSubstr_{k - 1}(W_i, 5)$ begins with $W_i[5..s + 2]$, whose length is $s - 2$.
The remaining $s + 1$ characters are $W_i[1..s + 1]$ because $(k - 1) - (s - 2) = s + 1$.
The definitions of $U_i$, $W_i$, and $X_n$ now give
\begin{align*}
\CSubstr_{k - 1}(U_i, 3)
&= U_i[3..k] U_i[1]
 = X_{s - 3} \mathrm{b}_i \mathrm{xa}X_{s - 3} \mathrm{yx}
 = X_{s - 3} \mathrm{b}_i \mathrm{xa}X_{s - 1}
 = \beta_i, \\
\CSubstr_{k - 1}(W_i, 5)
&= W_i[5..s + 2] W_i[1..s + 1]
 = X_{s - 3} \mathrm{b}_i \mathrm{xa}X_{s - 1}
 = \beta_i.
\end{align*}

Finally, the third identity follows from the equality
\[
\CSubstr_{k - 1}(U_i, s + 1)
= (\mathrm{xa}X_{s - 3} \mathrm{b}_i \mathrm{xa}X_{s - 3} \mathrm{yxa}X_{s - 3} \cdots)[s + 1..3s - 1]
= \mathrm{xa}X_{s - 3} \mathrm{yxa}X_{s - 3}
= \gamma.
\]
\end{proof}
Figure~\ref{fig:common-vertices} illustrates the occurrences of $\alpha_i$, $\beta_i$, and $\gamma$ described in Lemma~\ref{lem:length_k1_substrings}.
\begin{figure}[H]
\centering
\includegraphics[width=0.77\textwidth]{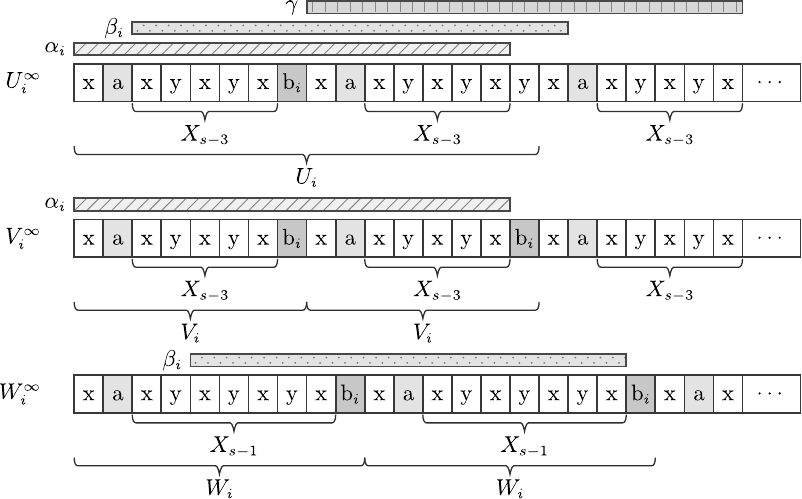}
\caption{
An illustration of occurrences of $\alpha_i$, $\beta_i$, and $\gamma$ in $U_i^\infty$, $V_i^\infty$, and $W_i^\infty$ for $s = 8$.
The horizontal blocks with diagonal hatching, dots, and vertical bars mark occurrences of $\alpha_i$, $\beta_i$, and $\gamma$, respectively.
As in Figure~\ref{fig:construction-strings}, cells containing $\mathrm{a}$ are filled with light gray, whereas cells containing $\mathrm{b}_i$ are filled with dark gray.
The underbraces mark occurrences of $X_{s - 3}$, $X_{s - 1}$, $U_i$, $V_i$, and $W_i$.
}
\label{fig:common-vertices}
\end{figure}

By the definition of $G$, the string $\alpha_i$ is a common vertex of $C_{U_i}$ and $C_{V_i}$, whereas $\beta_i$ is a common vertex of $C_{U_i}$ and $C_{W_i}$.
The vertex $\gamma$ belongs to $C_{U_i}$ for every $1 \leq i \leq t$.
Figure~\ref{fig:eulerian-circuits} illustrates the connections among these circuits.
\begin{figure}[t]
\centering
\includegraphics[width=\textwidth]{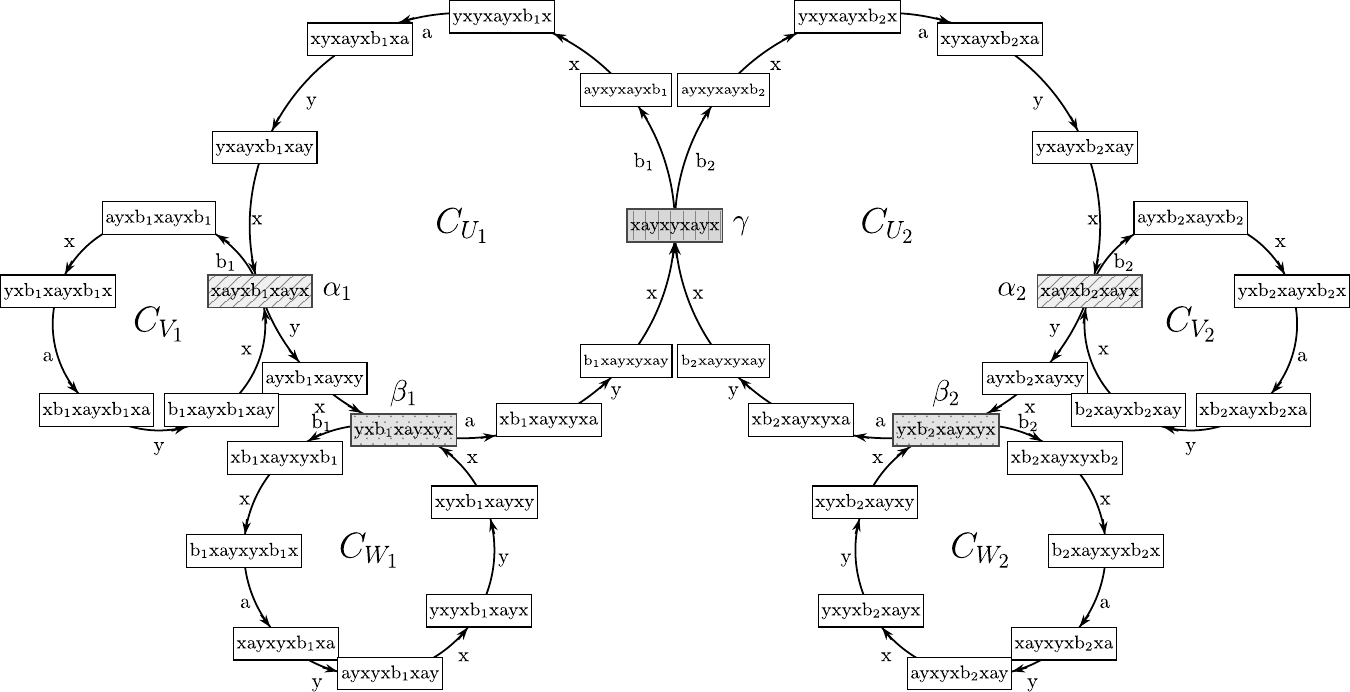}
\caption{
The graph $G$ constructed for $t = 2$ and $s = 5$.
Each directed circuit contains the edges corresponding to the length-$k$ circular substrings of one of $U_i$, $V_i$, and $W_i$.
As in Figure~\ref{fig:common-vertices}, the common vertices $\alpha_i$, $\beta_i$, and $\gamma$ are marked with diagonal hatching, dots, and vertical bars, respectively.
}
\label{fig:eulerian-circuits}
\end{figure}

We next prove the existence of an Eulerian circuit in $G$ using these vertices.
\begin{lemma} \label{lem:eulerian_circuit}
$G$ has an Eulerian circuit.
\end{lemma}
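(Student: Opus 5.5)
We will use the standard criterion that a directed graph has an Eulerian circuit if and only if every vertex has equal in-degree and out-degree and all vertices with nonzero degree lie in a single connected component (weakly connected suffices once degrees are balanced). Both conditions follow from the decomposition of the edge set of $G$ into the directed circuits $C_{U_i}$, $C_{V_i}$, and $C_{W_i}$ for $1 \leq i \leq t$, established just before Lemma~\ref{lem:length_k1_substrings}.

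\emph{Balance.} By Lemma~\ref{lem:distinct}, each edge of $G$ belongs to exactly one of the circuits $C_X$. Each $C_X$ is a closed walk: the edge for $\CSubstr_k(X, j)$ goes from $\CSubstr_{k-1}(X, j)$ to $\CSubstr_{k-1}(X, j+1)$, and $\CSubstr_{k-1}(X, |X|+1) = \CSubstr_{k-1}(X, 1)$. A closed walk contributes equally to the in-degree and out-degree of every vertex it visits, counted with multiplicity. Summing over the edge-disjoint circuits, every vertex of $G$ is balanced.

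\emph{Connectivity.} The vertex set consists of all length-$(k-1)$ substrings of the inputs, and each such substring lies on some circuit $C_X$. So it suffices to show that the union of the circuits is connected. We use the common vertices of Lemma~\ref{lem:length_k1_substrings}:
\begin{itemize}
\item $\alpha_i$ lies on both $C_{U_i}$ and $C_{V_i}$;
\item $\beta_i$ lies on both $C_{U_i}$ and $C_{W_i}$;
\item $\gamma$ lies on $C_{U_i}$ for every $i$.
\end{itemize}
Hence, for each $i$, the circuits $C_{U_i}$, $C_{V_i}$, $C_{W_i}$ are linked through $C_{U_i}$, and all the $C_{U_i}$ share the vertex $\gamma$. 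The union is therefore connected. Since it is a union of directed circuits, weak connectivity implies strong connectivity.

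\emph{Conclusion.} Applying Euler's theorem gives the Eulerian circuit. Alternatively, we can splice explicitly: start on $C_{U_1}$, detour around $C_{V_i}$ at $\alpha_i$ and around $C_{W_i}$ at $\beta_i$, and switch between the $C_{U_i}$ at $\gamma$.

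The only delicate point is making sure no vertex lies outside this union, and that the circuits being closed walks holds even if a circuit revisits a vertex. Both are immediate from the definitions, so no step is a real obstacle.
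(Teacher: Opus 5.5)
Your proof is correct and follows the same route as the paper. You get balance from the edge-disjoint decomposition of $G$ into the directed circuits $C_{U_i}$, $C_{V_i}$, $C_{W_i}$, get strong connectivity from the shared vertices $\alpha_i$, $\beta_i$, $\gamma$ of Lemma~\ref{lem:length_k1_substrings}, and then apply Euler's theorem, with your extra remarks (every vertex lies on some circuit, and a weakly connected union of directed circuits is strongly connected) only making explicit steps the paper leaves implicit.
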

\begin{proof}
By Euler's theorem for digraphs~\cite{Bang-JensenG09}, it suffices to prove that $G$ is strongly connected and that the indegree and outdegree are the same at every vertex.
We first prove that $G$ is strongly connected.
By Lemma~\ref{lem:length_k1_substrings}, the vertices $\alpha_i$ and $\beta_i$ connect $C_{U_i}$, $C_{V_i}$, and $C_{W_i}$ for each $1 \leq i \leq t$.
Also, the common vertex $\gamma$ connects the circuits $C_{U_i}$ over all $1 \leq i \leq t$.
Hence, $G$ is strongly connected.
We next prove that the indegree of every vertex equals its outdegree.
In every directed circuit, the numbers of incoming and outgoing edges are equal at each vertex.
Since $G$ is the union of these directed circuits, the indegree and outdegree are the same at every vertex of $G$.
These two properties imply that $G$ has an Eulerian circuit.
\end{proof}

We now use an Eulerian circuit of $G$ to construct a common superstring whose length matches the lower bound.
\begin{lemma} \label{lem:opt_upper_bound}
$\OPT(\calS_{k,t}) \leq t(4s + 2) + k - 1$.
\end{lemma}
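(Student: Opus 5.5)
We use the Eulerian circuit of $G$ given by Lemma~\ref{lem:eulerian_circuit} to build a superstring. Let $N = |\calS_{k,t}| = t(4s+2)$ be the number of edges of $G$, and write the circuit as $e_1, e_2, \dots, e_N$ with $e_j$ going from $v_{j-1}$ to $v_j$, so that $v_N = v_0$. Each edge $e_j$ corresponds to a unique input string $S_j \in \calS_{k,t}$, and the circuit contains every edge exactly once. By construction of $G$ we have $S_j = v_{j-1} c_j$, where $c_j$ is the label of $e_j$. We also have $S_j[2..k] = v_j$.

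The candidate superstring is the walk's spelling
\[
T = v_0 c_1 c_2 \cdots c_N,
\]
which has length $(k-1) + N = t(4s+2) + k - 1$.

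We show by induction on $j$ that $T[j..j+k-2] = v_{j-1}$ for every $1 \le j \le N+1$.
\begin{itemize}
\item The base case $j=1$ holds by definition of $T$.
\item For the step, if $T[j..j+k-2] = v_{j-1}$, then $T[j..j+k-1] = v_{j-1} c_j = S_j$.
\item Consequently $T[j+1..j+k-1] = S_j[2..k] = v_j$, which completes the induction.
\end{itemize}
In particular, every $S_j$ occurs in $T$ at position $j$. Since every element of $\calS_{k,t}$ is some $S_j$, $T$ is a common superstring. This gives $\OPT(\calS_{k,t}) \le |T| = t(4s+2) + k - 1$.

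The only point needing care is the bijection between edges and input strings. Lemma~\ref{lem:distinct} ensures that distinct inputs give distinct edges. Each edge also determines its string, since the string is its tail vertex followed by its label, and the graph has no parallel edges. Hence a circuit covering all edges exactly once covers all inputs. With this bijection the argument is a routine induction, and there is no real obstacle.
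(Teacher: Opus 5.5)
Your proposal is correct and follows essentially the same approach as the paper: spell out an Eulerian circuit of $G$ as $T = v_0 c_1 \cdots c_N$, of length $t(4s+2)+k-1$, and observe that each input string occurs in $T$ at the position of its edge. Your explicit induction on $T[j..j+k-2] = v_{j-1}$ just expands the paper's one-line assertion that the tail vertex of $e_j$ is $T[j..j+k-2]$.
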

\begin{proof}
Let $C = (e_1, \dots, e_m)$ be an arbitrary Eulerian circuit of $G$.
Since $G$ has no parallel edges, the definition of $G$ gives
$m = \sum_{i=1}^t (|U_i| + |V_i| + |W_i|) = t(4s + 2)$.
Let $T_0$ be the string corresponding to the starting vertex of $e_1$.
For each $1 \leq j \leq m$, let $c_j$ be the edge label of $e_j$.
We define $T = T_0 c_1 \cdots c_m$.
For every $1 \leq j \leq m$, the string corresponding to the starting vertex of $e_j$ is $T[j..j+k-2]$.
When a string $S \in \calS_{k,t}$ corresponds to $e_j$, we have $T[j..j+k-1] = T[j..j+k-2]c_j = S$.
Since every string in $\calS_{k,t}$ corresponds to an edge of $G$ and every edge of $G$ occurs in $C$, the string $T$ is a common superstring of $\calS_{k,t}$.
Since $|T_0| = k - 1$, the length of $T$ is $|T| = t(4s + 2) + k - 1$.
Thus, $\OPT(\calS_{k,t}) \leq t(4s + 2) + k - 1$.
\end{proof}

By combining the lower and upper bounds in Lemma~\ref{lem:opt_lower_bound} and Lemma~\ref{lem:opt_upper_bound}, we obtain the exact optimal value.
\begin{theorem} \label{thm:optval}
$\OPT(\calS_{k,t}) = t(4s+2) + k - 1$.
\end{theorem}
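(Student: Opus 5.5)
The theorem follows by combining the two bounds already established for $\calS_{k,t}$. Lemma~\ref{lem:opt_lower_bound} gives $\OPT(\calS_{k,t}) \geq t(4s+2) + k - 1$. Lemma~\ref{lem:opt_upper_bound} gives $\OPT(\calS_{k,t}) \leq t(4s+2) + k - 1$. Together these yield equality, so the proof itself is a single line citing both lemmas.

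It is still worth recalling what each bound rests on.

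The lower bound is a counting argument. The elements of $\calS_{k,t}$ are $t(4s+2)$ distinct strings of length $k$, by Lemma~\ref{lem:distinct}. Any string $T$ contains at most $|T| - k + 1$ distinct length-$k$ substrings. Hence any common superstring $T$ satisfies $|T| - k + 1 \geq t(4s+2)$.

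The upper bound uses the de Bruijn graph $G$.
\begin{itemize}
\item $G$ is the edge-disjoint union of the circuits $C_{U_i}$, $C_{V_i}$ and $C_{W_i}$.
\item Lemma~\ref{lem:length_k1_substrings} shows these circuits are glued together at the vertices $\alpha_i$, $\beta_i$ and the shared vertex $\gamma$. This makes $G$ strongly connected.
\item Being a union of circuits, $G$ is balanced at every vertex.
\item By Lemma~\ref{lem:eulerian_circuit}, $G$ therefore has an Eulerian circuit.
\item Spelling out this circuit gives a superstring of length $(k-1) + |E(G)| = (k-1) + t(4s+2)$.
\end{itemize}

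No step here is a real obstacle, since all the work lives in the cited lemmas. The most delicate ingredient is the strong connectivity in Lemma~\ref{lem:eulerian_circuit}: $\gamma$ must be a vertex common to all the $C_{U_i}$, and that holds because $\gamma$ contains no $\mathrm{b}_i$. The remaining point to confirm is the edge count $|E(G)| = t(4s+2)$. It follows from Lemma~\ref{lem:distinct}, which makes the edges pairwise distinct, together with the observation that distinct length-$k$ strings give distinct edges, so $G$ has no parallel edges.
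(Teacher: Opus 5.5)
Your proposal is correct and is the same as the paper's proof: the theorem follows by combining the counting lower bound of Lemma~\ref{lem:opt_lower_bound} with the Eulerian-circuit upper bound of Lemma~\ref{lem:opt_upper_bound}. Your summary of what those lemmas rest on also matches the paper, including the shared vertex $\gamma$ giving strong connectivity and the absence of parallel edges giving the edge count $t(4s+2)$.
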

 \section{Lower Bound on the Approximation Ratio} \label{se:lowerbound}
To prove the lower bound, we give a concrete procedure for selecting and merging strings when the initial input is $\calS_{k,t}$.

For every $1 \leq i \leq t$, we define $\hat{U}_i$, $\hat{V}_i$, and $\hat{W}_i$ as follows.
\begin{align*}
\hat{U}_i
&= \CSubstr_{|U_i| + k - 1}(U_i, 4) \\
&= U_{i,4} \odot U_{i,5} \odot \cdots \odot U_{i,k} \odot U_{i,1} \odot U_{i,2} \odot U_{i,3}, \\
\hat{V}_i
&= \CSubstr_{|V_i| + k - 1}(V_i, 2) \\
&= V_{i,2} \odot V_{i,3} \odot \cdots \odot V_{i,s} \odot V_{i,1}, \\
\hat{W}_i
&= \CSubstr_{|W_i| + k - 1}(W_i, 6) \\
&= W_{i,6} \odot W_{i,7} \odot \cdots \odot W_{i,s + 2}
   \odot W_{i,1} \odot \cdots \odot W_{i,5}.
\end{align*}
The equalities involving $\odot$ follow from Lemma~\ref{lem:merging_circular_substrings}.
The strings $\hat{U}_i$, $\hat{V}_i$, and $\hat{W}_i$ have lengths $4s - 1$, $3s - 1$, and $3s + 1$, respectively.
Our procedure first obtains $\hat{U}_i$, $\hat{V}_i$, and $\hat{W}_i$ for every $1 \leq i \leq t$ by merging the elements $U_{i,j}$, $V_{i,j}$, and $W_{i,j}$ as specified by the equalities involving $\odot$.
The two strings merged at each step have an overlap of length $k - 1$, and we perform the merge operations from left to right.
Figure~\ref{fig:merged-strings} illustrates these strings.

\begin{figure}[t]
\centering
\includegraphics[width=\textwidth]{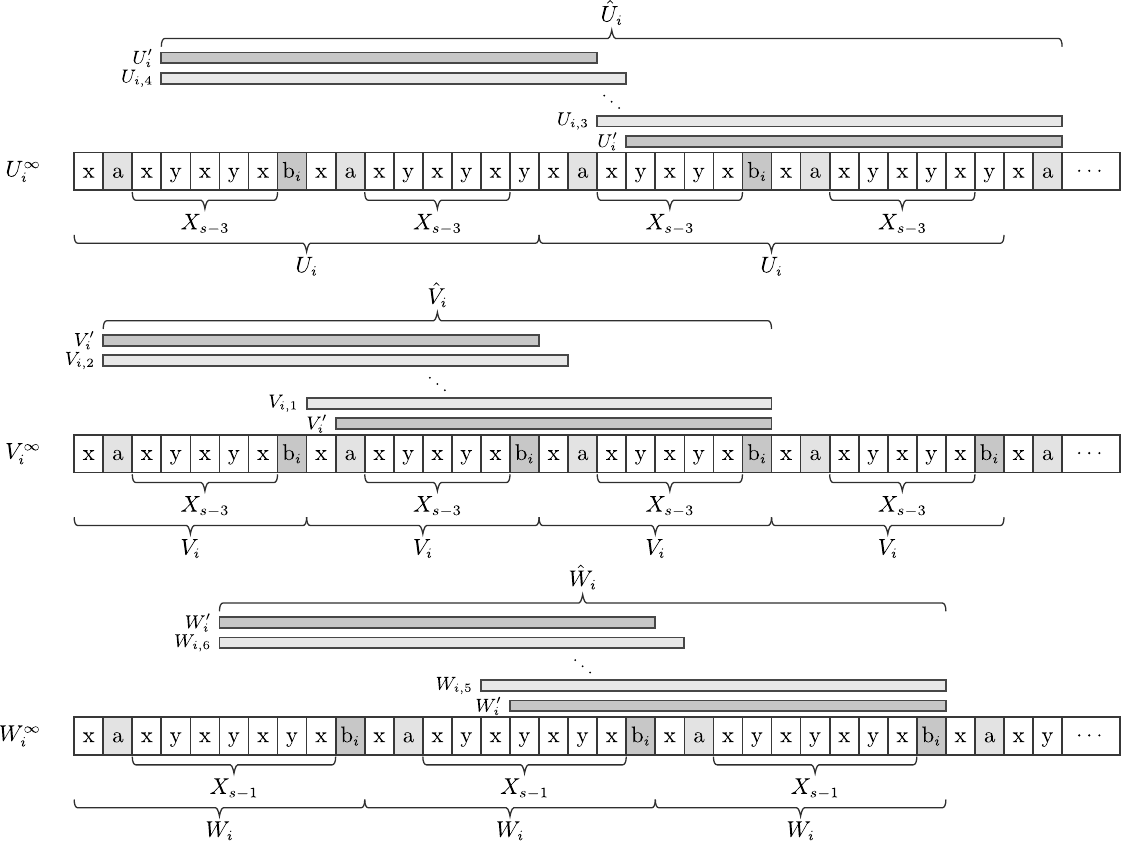}
\caption{
An illustration of $\hat{U}_i$, $\hat{V}_i$, and $\hat{W}_i$ in $U_i^\infty$, $V_i^\infty$, and $W_i^\infty$, respectively, for $s = 8$.
The overbraces mark these three strings, and the light gray blocks show the positions of $U_{i,j}$, $V_{i,j}$, and $W_{i,j}$ in $\hat{U}_i$, $\hat{V}_i$, and $\hat{W}_i$, respectively.
The two dark gray blocks in each row mark $U'_i$, $V'_i$, or $W'_i$ as a border of the corresponding string.
The underbraces mark occurrences of $X_{s - 3}$, $X_{s - 1}$, $U_i$, $V_i$, and $W_i$, while cells containing $\mathrm{a}$ and $\mathrm{b}_i$ are filled with light gray and dark gray, respectively.
}
\label{fig:merged-strings}
\end{figure}

We verify that every selection of a pair used to obtain $\hat{U}_i$, $\hat{V}_i$, or $\hat{W}_i$ satisfies the selection rule of the greedy algorithm.
Let $N$ be the total number of steps used to obtain $\hat{U}_i$, $\hat{V}_i$, and $\hat{W}_i$ over all $1 \leq i \leq t$.
For every $0 \leq r \leq N$, let $\calS^{(r)}_{k,t}$ be the set of strings after the first $r$ steps, where $\calS^{(0)}_{k,t} = \calS_{k,t}$.
By Lemma~\ref{lem:distinct}, the strings in $\calS^{(0)}_{k,t}$ are pairwise distinct and have length $k$, so $\maxov(\calS^{(0)}_{k,t}) \leq k - 1$.
We prove by induction on $r$ that the first $r$ steps satisfy the selection rule.
The claim is immediate for $r = 0$.
Assume that it holds for some $r < N$.
Corollary~\ref{cor:maxov_monotonicity} gives $\maxov(\calS^{(r)}_{k,t}) \leq \maxov(\calS^{(0)}_{k,t}) \leq k - 1$.
The specified pair at step $r + 1$ has an overlap of length $k - 1$, so this step also satisfies the selection rule.

After the $N$ steps, the input set is $\calS'_{k,t} = \calS^{(N)}_{k,t} = \{ \hat{U}_i, \hat{V}_i, \hat{W}_i \mid 1 \leq i \leq t \}$.
For every $1 \leq i \leq t$, let $U'_i = \CSubstr_{k - 1}(U_i, 4)$, $V'_i = \CSubstr_{k - 1}(V_i, 2)$, and $W'_i = \CSubstr_{k - 1}(W_i, 6)$.
By Lemma~\ref{lem:merging_circular_substrings}, $U'_i$, $V'_i$, and $W'_i$ are borders of $\hat{U}_i$, $\hat{V}_i$, and $\hat{W}_i$, respectively.
The following lemma shows that the maximum overlap length between any two distinct strings in $\calS'_{k,t}$ can be computed from their borders of length $k - 1$.
\begin{lemma} \label{lem:border_ov_equal}
For any two distinct strings $A, B \in \calS'_{k,t}$, let $A'$ and $B'$ be their respective borders of length $k - 1$,
which belong to $\{U'_i, V'_i, W'_i \mid 1 \leq i \leq t\}$.
Then $\ov(A, B) = \ov(A', B')$.
\end{lemma}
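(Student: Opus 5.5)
The plan is to show the two inequalities $\ov(A', B') \leq \ov(A, B)$ and $\ov(A, B) \leq \ov(A', B')$ separately. Both follow from the fact that $A'$ is a suffix of $A$ and $B'$ is a prefix of $B$, together with an a priori bound $\ov(A, B) \leq k - 1$ obtained from the greedy monotonicity result.

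\textbf{First inequality.} Let $\ell = \ov(A', B')$. Then $A'[k - \ell..k - 1] = B'[1..\ell]$. Since $A'$ is a suffix of $A$, the string $A'[k - \ell..k - 1]$ is the suffix of $A$ of length $\ell$. Since $B'$ is a prefix of $B$, the string $B'[1..\ell]$ is the prefix of $B$ of length $\ell$. So this common string is an overlap from $A$ to $B$, and $\ov(A, B) \geq \ell$.

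\textbf{Bound on $\ov(A, B)$.} For the reverse inequality I first show $\ov(A, B) \leq k - 1$. The set $\calS'_{k,t} = \calS^{(N)}_{k,t}$ is obtained from $\calS^{(0)}_{k,t} = \calS_{k,t}$ by $N$ greedy steps. We verified above that these steps obey the selection rule. By Lemma~\ref{lem:distinct}, the elements of $\calS_{k,t}$ are distinct strings of the same length $k$, so no element is a substring of another. Moreover, $|\calS'_{k,t}| = 3t \geq 3 \geq 2$. Hence Corollary~\ref{cor:maxov_monotonicity} applies and gives
\[
\maxov(\calS'_{k,t}) \leq \maxov(\calS_{k,t}) \leq k - 1.
\]
Since $A \neq B$ are both in $\calS'_{k,t}$, we get $r := \ov(A, B) \leq k - 1$.

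\textbf{Second inequality.} The overlap of length $r$ is the suffix of $A$ of length $r$. Because $r \leq k - 1 = |A'|$ and $A'$ is a suffix of $A$, this string is also a suffix of $A'$. Symmetrically, it is the prefix of $B$ of length $r \leq |B'|$, hence a prefix of $B'$. Thus it is an overlap from $A'$ to $B'$, and $\ov(A', B') \geq r$. Combining the two inequalities gives $\ov(A, B) = \ov(A', B')$.

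I do not expect any step to be a real obstacle. The only point needing care is the bound $\ov(A, B) \leq k - 1$, and in particular checking the hypotheses of Corollary~\ref{cor:maxov_monotonicity}: no substring relations in $\calS_{k,t}$, validity of the greedy steps, and at least two remaining strings. All of these are already established in the text preceding the statement.
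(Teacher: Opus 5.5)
Your proof is correct and has the same structure as the paper's: both inequalities come from $A'$ being a suffix of $A$ and $B'$ being a prefix of $B$, once the bound $\ov(A,B)\le k-1$ is known. The only difference is where that bound comes from. You derive it from Corollary~\ref{cor:maxov_monotonicity}, checking its hypotheses correctly and relying on the earlier verification that the first $N$ steps are valid greedy steps. The paper instead argues directly: $r\ge k$ would make the length-$k$ suffix of $A$ equal to the length-$k$ prefix of $B$, two elements of $\calS_{k,t}$ generated by distinct members of $\{U_i,V_i,W_i \mid 1\le i\le t\}$, which contradicts Lemma~\ref{lem:distinct}.
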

\begin{proof}
Let $r = \ov(A, B)$.
Every length-$k$ substring of either $A$ or $B$ belongs to $\calS_{k,t}$.
If $r \geq k$, the length-$k$ suffix of $A$ is equal to the length-$k$ prefix of $B$.
Since $A$ and $B$ correspond to distinct members of $\{U_i, V_i, W_i \mid 1 \leq i \leq t\}$, this contradicts Lemma~\ref{lem:distinct}.
Hence, $r \leq k - 1$.

Since $A'$ is a suffix of $A$ and $B'$ is a prefix of $B$, $\ov(A', B') \leq \ov(A, B) = r$.
The bound $r \leq k - 1$ implies that the suffix of $A$ of length $r$ is a suffix of $A'$, and the prefix of $B$ of length $r$ is a prefix of $B'$.
Hence, $r \leq \ov(A', B')$.
The two inequalities give $\ov(A, B) = \ov(A', B')$.
\end{proof}

\begin{table}[H]
\caption{Maximum overlap lengths from $U'_i$, $V'_i$, and $W'_i$ to $U'_j$, $V'_j$, and $W'_j$. The labels (a)--(i) identify the entries referred to in the proof.}
\label{tab:border-overlaps}
\centering
$
\begin{array}{c|r@{\;}c@{\qquad}r@{\;}c@{\qquad}r@{\;}c}
\ov(\cdot, \cdot)
& \multicolumn{2}{c}{U'_j}
& \multicolumn{2}{c}{V'_j}
& \multicolumn{2}{c}{W'_j} \\ \hline
U'_i
& \text{(a)}
&
\begin{cases}
|U'_i| & (i = j), \\
0 & (i \neq j)
\end{cases}
& \text{(b)} & 1
& \text{(c)} & 0
\\[2mm]
V'_i
& \text{(d)}
&
\begin{cases}
s - 3 & (i = j), \\
0 & (i \neq j)
\end{cases}
& \text{(e)}
&
\begin{cases}
|V'_i| & (i = j), \\
0 & (i \neq j)
\end{cases}
& \text{(f)}
&
\begin{cases}
s - 3 & (i = j), \\
0 & (i \neq j)
\end{cases}
\\[2mm]
W'_i
& \text{(g)}
&
\begin{cases}
s - 3 & (i = j), \\
0 & (i \neq j)
\end{cases}
& \text{(h)} & 0
& \text{(i)}
&
\begin{cases}
|W'_i| & (i = j), \\
0 & (i \neq j)
\end{cases}
\end{array}
$
\end{table}

We next determine the overlaps among the borders $U'_i$, $V'_i$, and $W'_i$.
The overlap lengths are listed in Table~\ref{tab:border-overlaps}, and their correctness is proved in the following lemma.
\begin{lemma} \label{lem:border_overlaps}
For every $1 \leq i, j \leq t$, the values of $\ov(A, B)$ for $A \in \{U'_i, V'_i, W'_i\}$ and $B \in \{U'_j, V'_j, W'_j\}$ are as shown in Table~\ref{tab:border-overlaps}.
\end{lemma}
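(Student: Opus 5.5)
The plan is to first write out the three borders explicitly and then settle each table entry by tracking where the special characters $\mathrm{a}$ and $\mathrm{b}_i$ can sit inside an overlap. Two facts about $X_n$ will be used throughout: $X_n$ is the length-$n$ suffix of $\cdots\mathrm{yxyx}$, and it contains neither $\mathrm{a}$ nor any $\mathrm{b}_j$. Consequently $X_{n}$ is a suffix of $X_{n+2}$, and in particular $X_{s-1}[4..s-1] = X_{s-3}[2..s-3]$. Reading off $U_i^\infty$, $V_i^\infty$ and $W_i^\infty$ from the indicated starting positions gives
\[
U'_i = X_{s-3}[2..s-3]\,\mathrm{b}_i\,\mathrm{xa}X_{s-3}\,\mathrm{yxa},\qquad
V'_i = \mathrm{a}X_{s-3}\,\mathrm{b}_i\,\mathrm{xa}X_{s-3}\,\mathrm{b}_i,\qquad
W'_i = X_{s-1}[4..s-1]\,\mathrm{b}_i\,\mathrm{xa}X_{s-1}\,\mathrm{b}_i .
\]
Each of these has length $2s-1$.

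The diagonal entries (a), (e) and (i) with $i=j$ are immediate, since $\ov(A,A)=|A|$. Every other entry is handled by looking at the last character of $A$. A nonempty overlap from $A$ to $B$ is a prefix of $B$ ending with that character, and it is also a suffix of $A$.

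\textbf{$A=V'_i$ or $A=W'_i$.} These end with $\mathrm{b}_i$, so any nonempty overlap is a prefix of $B$ ending at an occurrence of $\mathrm{b}_i$. This forces $B$ to have index $j=i$, which gives $0$ for $i\neq j$ in (d), (f), (g) and the off-diagonal part of (e), (i).
\begin{itemize}
\item For $B=U'_i$, the only candidate prefix is $X_{s-3}[2..s-3]\mathrm{b}_i$, of length $s-3$.
\item For $B=W'_i$, the shortest candidate is $X_{s-1}[4..s-1]\mathrm{b}_i$, again of length $s-3$. Any longer candidate runs through the second $\mathrm{b}_i$ and has length $2s-1$, so it would be all of $B$; this is impossible because $A\neq B$ and the lengths are equal.
\item For $B=V'_i$, the first candidate is $\mathrm{a}X_{s-3}\mathrm{b}_i$, of length $s-1$.
\end{itemize}
Now compare these with the suffixes of $A$ of the same length. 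The length-$(s-3)$ suffixes of $V'_i$ and of $W'_i$ are $X_{s-3}[2..s-3]\mathrm{b}_i$ and $X_{s-1}[s-(s-4)..s-1]\mathrm{b}_i$ respectively. Both equal $X_{s-3}[2..s-3]\mathrm{b}_i$ by the suffix-consistency of the $X_n$. This gives exactly $s-3$ in (d), (f) and (g). For (h), the length-$(s-1)$ suffix of $W'_i$ is $X_{s-1}[2..s-1]\mathrm{b}_i$, which contains no $\mathrm{a}$. So it cannot equal $\mathrm{a}X_{s-3}\mathrm{b}_i$, and the value is $0$.

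\textbf{$A=U'_i$.} This string ends with $\mathrm{a}$. In $V'_j$ the first $\mathrm{a}$ is at position $1$, which gives an overlap of length $1$ in (b). The next $\mathrm{a}$ in $V'_j$ comes after $\mathrm{b}_j$, so any longer prefix ending in $\mathrm{a}$ contains $\mathrm{b}_j$. In $U'_j$ and $W'_j$ every prefix ending in $\mathrm{a}$ contains $\mathrm{b}_j$, since these strings start with $\mathrm{x}/\mathrm{y}$ letters followed by $\mathrm{b}_j$. In each case the matching suffix of $U'_i$ must therefore contain $\mathrm{b}_i$ at the same offset; this already requires $i=j$. The only $\mathrm{b}_i$ of $U'_i$ is at position $s-3$, so such a suffix has length at least $s+3$. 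The candidate prefixes in $V'_j$ and $W'_j$ end at their second $\mathrm{a}$, at lengths $s+2$ and $s-1$; any later $\mathrm{a}$ lies past a further $\mathrm{b}_j$ that $U'_i$ cannot match. Hence these candidates are too short to reach back to $\mathrm{b}_i$, and the values in (b) and (c) are $1$ and $0$. For (a) with $i\neq j$ the value is $0$ by the $\mathrm{b}_i$ versus $\mathrm{b}_j$ mismatch.

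I expect the main obstacle to be the bookkeeping in this last group, (a)--(c). The argument must check carefully that every prefix of $B$ ending in $\mathrm{a}$ either contains no $\mathrm{b}$ (which can only happen in $V'_j$, where it has length $1$) or is too short to place its $\mathrm{b}$ against the unique $\mathrm{b}_i$ of $U'_i$. I would handle this by listing the positions of $\mathrm{a}$ and $\mathrm{b}$ in each border as explicit numbers and comparing them directly.
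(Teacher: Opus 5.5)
Your proposal is correct and follows essentially the same route as the paper: write the three borders explicitly, use the last character of $A$ ($\mathrm{a}$ or $\mathrm{b}_i$) to restrict the candidate overlap lengths to the positions of that character in $B$, and then check each candidate against the unique $\mathrm{b}_i$ of $U'_i$ or the leading $\mathrm{a}$ of $V'_i$. Two slips do not affect the conclusion. First, the second $\mathrm{a}$ of $V'_j$ sits at position $s+1$, not $s+2$, and $W'_j$ has only one $\mathrm{a}$. Second, in entry (h) you should also dismiss the length-$(2s-1)$ candidate, which is immediate because $V'_i[1]=\mathrm{a}$ while $W'_i[1]\in\{\mathrm{x},\mathrm{y}\}$.
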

\begin{proof}
Set $\ell = k - 1 = 2s - 1$.
The definitions of $U_i$, $V_i$, $W_i$, and $X_n$ give the following expressions for the three borders.
\[
\begin{aligned}
U'_i &= X_{s - 4} \mathrm{b}_i \mathrm{xa}X_{s - 1} \mathrm{a}, \\
V'_i &= \mathrm{a}X_{s - 3} \mathrm{b}_i \mathrm{xa}X_{s - 3} \mathrm{b}_i, \\
W'_i &= X_{s - 4} \mathrm{b}_i \mathrm{xa}X_{s - 1} \mathrm{b}_i.
\end{aligned}
\]
The positions of $\mathrm{b}_i$ and $\mathrm{a}$ in these expressions are as follows.
\[
\begin{array}{c|cc}
& \mathrm{b}_i & \mathrm{a} \\ \hline
U'_i & s - 3 & s - 1, \ell \\
V'_i & s - 1, \ell & 1, s + 1 \\
W'_i & s - 3, \ell & s - 1
\end{array}
\]
If $\ov(A, B) > 0$, the definition of $\ov$ implies $A[\ell] = B[\ov(A, B)]$.

We first prove entries~(a)--(c), which have $U'_i$ as the first argument.
Since $U'_i[\ell] = \mathrm{a}$, the possible positive overlap lengths with $U'_j$, $V'_j$, and $W'_j$ are $\{s - 1, \ell\}$, $\{1, s + 1\}$, and $\{s - 1\}$, respectively.
For entry~(b), the equality $U'_i[\ell] = V'_j[1] = \mathrm{a}$ gives an overlap of length at least $1$.
At the candidate lengths $s - 1$ and $s + 1$, the prefix of the second argument contains $\mathrm{b}_j$, but the corresponding suffix of $U'_i$ contains only $\mathrm{x}$, $\mathrm{y}$, and $\mathrm{a}$.
For entry~(a), the candidate $\ell$ requires $U'_i = U'_j$, which holds exactly when $i = j$.
In this case, $\ov(U'_i, U'_i) = |U'_i|$.
No other positive overlap is possible, so entries~(a)--(c) follow.

We next prove entries~(d)--(i).
Both $V'_i$ and $W'_i$ end in $\mathrm{b}_i$.
For $i \neq j$, none of $U'_j$, $V'_j$, and $W'_j$ contains $\mathrm{b}_i$, so the corresponding cases in entries~(d)--(i) are zero.
Now assume $i = j$.
The equalities $\ov(V'_i, V'_i) = |V'_i|$ and $\ov(W'_i, W'_i) = |W'_i|$ prove entries~(e) and~(i), respectively.
For the remaining entries~(d), (f), (g), and~(h), the possible positive overlap lengths with $U'_i$, $V'_i$, and $W'_i$ as the second argument are $\{s - 3\}$, $\{s - 1, \ell\}$, and $\{s - 3, \ell\}$, respectively.
The equalities
\[
U'_i[1..s - 3]
= W'_i[1..s - 3]
= V'_i[s + 3..\ell]
= W'_i[s + 3..\ell]
= X_{s - 4}\mathrm{b}_i
\]
give the lower bound $s - 3$ in entries~(d), (f), and~(g).
Since $s - 3$ is the only candidate when the second argument is $U'_i$, entries~(d) and~(g) are equal to $s - 3$.
For entry~(f), the conditions $V'_i[1] = \mathrm{a}$ and $W'_i[1] \in \{\mathrm{x}, \mathrm{y}\}$ rule out an overlap of length $\ell$, so this entry is equal to $s - 3$.
For entry~(h), the remaining candidates are $s - 1$ and $\ell$.
For both candidates, the corresponding suffix of $W'_i$ begins with the first character of $X_{s - 4}$, while each prefix of $V'_i$ begins with $\mathrm{a}$.
The suffix and prefix differ for both candidates, so entry~(h) is zero.
\end{proof}

We now specify the remaining merge operations in the lower-bound construction.
By Lemma~\ref{lem:border_ov_equal}, the overlap between any two distinct strings in $\calS'_{k,t}$ is equal to the overlap between their corresponding borders of length $k - 1$.
Lemma~\ref{lem:border_overlaps} therefore gives $\maxov(\calS'_{k,t}) = s - 3$.
For every $1 \leq i \leq t$, we merge $\hat{V}_i$ with $\hat{U}_i$ in this order and define $M_i = \hat{V}_i \odot \hat{U}_i$.
Lemma~\ref{lem:border_ov_equal} and entry~(d) of Table~\ref{tab:border-overlaps} give
$\ov(\hat{V}_i, \hat{U}_i) = \ov(V'_i, U'_i) = s - 3$.
By Corollary~\ref{cor:maxov_monotonicity}, the maximum overlap length is at most $s - 3$ at every step.
Every selected pair has overlap length $s - 3$, so each operation satisfies the selection rule of the greedy algorithm.

After these operations, the input set is
$\calS''_{k,t} = \{M_i, \hat{W}_i \mid 1 \leq i \leq t\}$.
Corollary~\ref{cor:maxov_monotonicity} gives $\maxov(\calS''_{k,t}) \leq \maxov(\calS'_{k,t}) = s - 3$.
For every $1 \leq i \leq t$ and $A \in \calS''_{k,t} \setminus \{M_i\}$, the inequality
$s - 3 < \min\{|\hat{U}_i|, |\hat{V}_i|\}$ and Lemma~\ref{lem:overlap_after_merge} give $\ov(M_i, A) = \ov(\hat{U}_i, A)$ and $\ov(A, M_i) = \ov(A, \hat{V}_i)$.
Thus, Lemmas~\ref{lem:border_ov_equal} and~\ref{lem:border_overlaps} give
\begin{align*}
\ov(M_i, M_j)
&= \ov(\hat{U}_i, \hat{V}_j) = 1 && (i \neq j), \\
\ov(M_i, \hat{W}_j)
&= \ov(\hat{U}_i, \hat{W}_j) = 0, \\
\ov(\hat{W}_i, M_j)
&= \ov(\hat{W}_i, \hat{V}_j) = 0, \\
\ov(\hat{W}_i, \hat{W}_j)
&= 0 && (i \neq j).
\end{align*}
The identities for the ordered pairs in $\calS''_{k,t}$ give $\maxov(\calS''_{k,t}) \leq 1$.
We next merge $M_1, M_2, \dots, M_t$ in this order to obtain
$M = M_1 \odot M_2 \odot \cdots \odot M_t$.
Lemma~\ref{lem:overlap_after_merge} and Corollary~\ref{cor:maxov_monotonicity} successively show that every operation has overlap length $1$ and satisfies the selection rule of the greedy algorithm.

After these operations, the input set is $\calS'''_{k,t} = \{\hat{W}_1, \dots, \hat{W}_t, M\}$.
Corollary~\ref{cor:maxov_monotonicity} gives $\maxov(\calS'''_{k,t}) \leq \maxov(\calS''_{k,t}) \leq 1$.
Since every $M_i$ has length greater than $1$, successive applications of Lemma~\ref{lem:overlap_after_merge} give
\[
\ov(M, \hat{W}_i) = \ov(M_t, \hat{W}_i) = 0
\quad\text{and}\quad
\ov(\hat{W}_i, M) = \ov(\hat{W}_i, M_1) = 0.
\]
Together with $\ov(\hat{W}_i, \hat{W}_j) = 0$ for $i \neq j$, these equalities give $\maxov(\calS'''_{k,t}) = 0$.

The following theorem gives the length of the string obtained by the procedure above.
\begin{theorem} \label{thm:greedy_output_length}
For every even integer $k = 2s \geq 10$ and every $t \geq 1$, there exists a string $T' \in \GREEDY(\calS_{k,t})$ of length
\[
|T'| = t(9s + 1) + 1.
\]
\end{theorem}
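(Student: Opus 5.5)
The plan is to finish the greedy run begun above and then count lengths. After $\calS'''_{k,t}$ we have $\maxov(\calS'''_{k,t}) = 0$. By Corollary~\ref{cor:maxov_monotonicity}, every remaining step merges two strings with overlap $0$, and any pair is a valid greedy choice. So we concatenate $M, \hat{W}_1, \dots, \hat{W}_t$ in some order, say $T' = M \hat{W}_1 \cdots \hat{W}_t$. The concatenation order does not affect the length, and every step satisfies the selection rule. Hence $T' \in \GREEDY(\calS_{k,t})$.

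For the length, we track how each merge changes the total length of the current set. A merge of $X$ and $Y$ with overlap $\ov(X,Y)$ produces a string of length $|X| + |Y| - \ov(X,Y)$. Thus the final length equals the total input length $kt(4s+2)$ minus the sum of the overlaps used over all steps. A cleaner route is to start from the strings in $\calS'_{k,t}$ directly, whose lengths $|\hat{U}_i| = 4s-1$, $|\hat{V}_i| = 3s-1$ and $|\hat{W}_i| = 3s+1$ were computed above.

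Then $|M_i| = (3s-1) + (4s-1) - (s-3) = 6s+1$, using entry~(d) and Lemma~\ref{lem:border_ov_equal}. Merging $M_1, \dots, M_t$ with overlap $1$ at each of the $t-1$ steps gives $|M| = t(6s+1) - (t-1)$. Adding the $t$ strings $\hat{W}_i$ with zero overlaps gives
\[
|T'| = t(6s+1) - (t-1) + t(3s+1) = t(9s+2) - t + 1 = t(9s+1) + 1,
\]
as claimed.

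The only point needing care is that each merge in the chain $M_1 \odot \cdots \odot M_t$ really has overlap exactly $1$. This follows from Lemma~\ref{lem:overlap_after_merge}: the partial product ends with $M_{j}$, of length $6s+1 > 1$, so its overlap to $M_{j+1}$ equals $\ov(M_j, M_{j+1}) = 1$. The same reasoning as in the text applies to overlaps involving the partially merged strings. Everything else is arithmetic already set up by the preceding discussion, so I expect no real obstacle beyond verifying these bookkeeping equalities.
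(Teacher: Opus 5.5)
Your proposal is correct and follows the paper's proof. You use the same final concatenation $M\hat{W}_1\cdots\hat{W}_t$, justified by $\maxov(\calS'''_{k,t})=0$ and Corollary~\ref{cor:maxov_monotonicity}, and the same overlaps ($k-1$, then $s-3$, then $1$ for $t-1$ steps, then $0$). The only cosmetic difference is that you start the count from $|\hat{U}_i|=4s-1$, $|\hat{V}_i|=3s-1$, $|\hat{W}_i|=3s+1$ instead of subtracting all overlaps from the total input length $kt(4s+2)$.

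One small precision: in the chain $M_1\odot\cdots\odot M_t$, Lemma~\ref{lem:overlap_after_merge} applies because the overlap is at most $\maxov\le 1<|M_j|$ by Corollary~\ref{cor:maxov_monotonicity}, not merely because $|M_j|>1$.
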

\begin{proof}
We merge the strings in $\calS'''_{k,t}$ in the order
\[
T' = M \odot \hat{W}_1 \odot \cdots \odot \hat{W}_t
   = M\hat{W}_1 \cdots \hat{W}_t.
\]
Corollary~\ref{cor:maxov_monotonicity} and $\maxov(\calS'''_{k,t}) = 0$ show that all $t$ operations have overlap length $0$ and satisfy the selection rule of the greedy algorithm.
Since every merge operation used to obtain $T'$ satisfies the selection rule of the greedy algorithm, we have $T' \in \GREEDY(\calS_{k,t})$.

We compute the length of $T'$ by subtracting the sum of the overlap lengths in all merge operations from the total length of the strings in $\calS_{k,t}$.
The total length of all strings in $\calS_{k,t}$ is $kt(4s + 2)$.
For each $i$, the strings $\hat{U}_i$, $\hat{V}_i$, and $\hat{W}_i$ are obtained by $k - 1$, $s - 1$, and $s + 1$ merge operations, respectively.
All merge operations used to obtain $\calS'_{k,t}$ have overlap length $k - 1$, so the sum of their overlap lengths is $t(4s - 1)(k - 1)$.
The $t$ operations from $\calS'_{k,t}$ to $\calS''_{k,t}$ each have overlap length $s - 3$.
The $t - 1$ operations from $\calS''_{k,t}$ to $\calS'''_{k,t}$ each have overlap length $1$.
The final $t$ operations have overlap length $0$.
Subtracting these overlap lengths from the total length gives
\begin{align*}
|T'|
&= kt(4s + 2) - t(4s - 1)(k - 1) - t(s - 3) - (t - 1) - t \cdot 0 \\
&= t\left( 2s(4s + 2) - (4s - 1)(2s - 1) - (s - 3) - 1 \right) + 1 \\
&= t\left( 8s^2 + 4s - 8s^2 + 6s - 1 - s + 2 \right) + 1 \\
&= t\left(9s + 1\right) + 1.
\end{align*}
\end{proof}

We now obtain the lower bound on the approximation ratio for every even $k$.
\begin{theorem} \label{thm:even_k_lower_bound}
For every even integer $k \geq 10$,
\[
\rho_k \geq \frac{9k + 2}{4k + 4}
= \frac{9}{4} - \frac{7}{4k + 4}.
\]
\end{theorem}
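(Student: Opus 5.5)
We combine Theorem~\ref{thm:optval} with Theorem~\ref{thm:greedy_output_length} and let $t \to \infty$. Fix an even $k = 2s \geq 10$. For every $t \geq 1$, the instance $\calS_{k,t}$ is a nonempty subset of $\Sigma^k$ over the finite alphabet $\Sigma = \{\mathrm{x}, \mathrm{y}, \mathrm{a}, \mathrm{b}_1, \dots, \mathrm{b}_t\}$. It is therefore admissible in the supremum that defines $\rho_k$. Theorem~\ref{thm:greedy_output_length} provides $T' \in \GREEDY(\calS_{k,t})$ with $|T'| = t(9s+1)+1$. Theorem~\ref{thm:optval} gives $\OPT(\calS_{k,t}) = t(4s+2) + k - 1 = t(4s+2) + 2s - 1$. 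Hence, for every $t \geq 1$,
\[
\rho_k \geq \frac{t(9s+1)+1}{t(4s+2)+2s-1}.
\]

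Since $\rho_k$ does not depend on $t$, we take the limit $t \to \infty$ on the right-hand side. The additive constants are negligible compared with the terms linear in $t$, so the limit is $(9s+1)/(4s+2)$. We then substitute $s = k/2$ and multiply numerator and denominator by $2$:
\[
\frac{9s+1}{4s+2} = \frac{9k/2+1}{2k+2} = \frac{9k+2}{4k+4}.
\]
Finally, we verify the stated identity $\frac{9}{4} - \frac{7}{4k+4} = \frac{9(k+1) - 7}{4k+4} = \frac{9k+2}{4k+4}$.

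There is no substantive obstacle, since all combinatorial work is contained in the earlier theorems. The only points needing care are two. First, the definition of $\rho_k$ takes a supremum over both instances and greedy outputs, so exhibiting one output $T'$ per instance suffices. Second, the inequality holds for each $t$, so it also holds in the limit. Alternatively, the bound follows without a limit because $\frac{t(9s+1)+1}{t(4s+2)+2s-1}$ approaches $\frac{9s+1}{4s+2}$ from below and $\rho_k$ is a supremum.
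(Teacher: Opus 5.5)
Your proof is correct and follows the paper's argument exactly: for each $t$, combine Theorems~\ref{thm:optval} and~\ref{thm:greedy_output_length}, then let $t \to \infty$. One small correction to your closing aside: because the ratio approaches $\frac{9s+1}{4s+2}$ from below, no single $t$ reaches the bound, so you still need the limit (equivalently, the supremum over $t$) to conclude; this does not affect your main argument.
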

\begin{proof}
Fix an even integer $k = 2s \geq 10$.
For every $t \geq 1$, Theorems~\ref{thm:optval} and~\ref{thm:greedy_output_length} imply that there exists a string $T' \in \GREEDY(\calS_{k,t})$ satisfying
\[
\frac{|T'|}{\OPT(\calS_{k,t})}
= \frac{t(9s + 1) + 1}{t(4s + 2) + k - 1}.
\]
Since $t$ is arbitrary, the definition of $\rho_k$ gives
\[
\rho_k
\geq \lim_{t \to \infty}
\frac{t(9s + 1) + 1}{t(4s + 2) + k - 1}
= \frac{9s + 1}{4s + 2}
= \frac{9k + 2}{4k + 4}
= \frac{9}{4} - \frac{7}{4k + 4}.
\]
\end{proof}

By taking the limit over even $k$, we obtain the following lower bound of approximation ratio.
\begin{corollary} \label{cor:general_lower_bound}
The approximation ratio of the greedy algorithm for the general SCS problem is at least $9/4$.
\end{corollary}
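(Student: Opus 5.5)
The plan is to obtain the corollary from Theorem~\ref{thm:even_k_lower_bound} by a limiting argument over even $k$. The approximation ratio for general SCS is
\[
\rho = \sup_{\calS}\ \sup_{T \in \GREEDY(\calS)} \frac{|T|}{\OPT(\calS)},
\]
where the supremum ranges over all finite input sets over finite alphabets, with no restriction on string lengths. Every instance of the $k$-SCS problem is also an instance of general SCS, so $\rho \geq \rho_k$ holds for every $k$.

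The key steps are as follows.
\begin{enumerate}
\item Fix an arbitrary even $k \geq 10$. Theorem~\ref{thm:even_k_lower_bound} gives
\[
\rho \geq \rho_k \geq \frac{9}{4} - \frac{7}{4k+4}.
\]
\item This bound holds for all even $k \geq 10$, and its right-hand side tends to $9/4$ as $k \to \infty$. Hence $\rho \geq \sup_{k} \left(\frac{9}{4} - \frac{7}{4k+4}\right) = 9/4$.
\end{enumerate}

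If one prefers explicit witnesses over suprema, the argument can be made concrete. Given $\varepsilon > 0$, choose an even $k$ with $7/(4k+4) < \varepsilon/2$. By the limit computed in the proof of Theorem~\ref{thm:even_k_lower_bound}, choose $t$ large enough that the ratio for $\calS_{k,t}$ exceeds $\frac{9}{4} - \varepsilon$. The string $T'$ from Theorem~\ref{thm:greedy_output_length} then witnesses a greedy output whose ratio is above $\frac{9}{4} - \varepsilon$.

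There is no substantive obstacle here. The only point needing care is interpretive: the bound $9/4$ is a supremum and need not be attained by any single instance. Because of this, the statement should be read as $\rho \geq 9/4$ in the sense of the supremum over instances and tie-breaking choices, which matches the definition of $\rho_k$ in Section~\ref{se:preliminaries}.
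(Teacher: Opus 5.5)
Your proposal is correct and follows the paper's argument: every $k$-SCS instance is a general SCS instance, so the bound $\rho_k \geq \frac{9}{4} - \frac{7}{4k+4}$ from Theorem~\ref{thm:even_k_lower_bound} holds for general SCS, and letting $k \to \infty$ over even values gives $9/4$. Your explicit $\varepsilon$-witness version, using $\calS_{k,t}$ and Theorem~\ref{thm:greedy_output_length}, is a valid unpacking of the same limit.
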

 \section*{AI Usage Disclosure}
The author used OpenAI's GPT-5.6 Sol as part of a custom harness.
The counterexample and proof strategies were first discovered by the model and then examined in detail and simplified by the author.
For writing assistance, the same model was used to draft technical text, correct typographical errors, and improve the organization of the manuscript.
All claims and proofs were verified by the author, who takes full responsibility for the contents of this paper.
 
\bibliographystyle{plain}
\bibliography{ref}

\end{document}